\documentclass{article}

\usepackage[english]{babel}

\usepackage[letterpaper,top=2cm,bottom=2cm,left=2cm,right=2cm,marginparwidth=1.75cm]{geometry}

\usepackage{graphicx}
\usepackage[colorlinks=true, allcolors=blue]{hyperref}

\usepackage{amsfonts}
\usepackage{amsmath}
\usepackage{amsfonts}
\usepackage{amsthm}

 \usepackage{subcaption}
\usepackage{url}
\usepackage{authblk}
\usepackage{bm}

\usepackage{graphicx}%
\usepackage{multirow}%
\usepackage{amsmath,amssymb,amsfonts}%
\usepackage{amsthm}%
\usepackage{mathrsfs}%
\usepackage[title]{appendix}%
\usepackage{xcolor}%
\usepackage{textcomp}%
\usepackage{manyfoot}%
\usepackage{booktabs}%
\usepackage{algorithm}%
\usepackage{algorithmicx}%
\usepackage{algpseudocode}%
\usepackage{listings}%
\usepackage{url}
\usepackage{float}
\usepackage{authblk}
\usepackage{hyperref}
\usepackage{adjustbox}
\usepackage{bbm}
\usepackage{todonotes}
\usepackage{upgreek}
\usepackage{wrapfig}
\usepackage{cite}

\newtheorem{theorem}{Theorem}

\newtheorem{lemma}{Lemma}
\newtheorem{proposition}{Proposition}

\newtheorem{corollary}{Corollary}

\theoremstyle{definition}

\theoremstyle{remark}

\algrenewcommand\algorithmicrequire{\textbf{Input:}}
\algrenewcommand\algorithmicensure{\textbf{Output:}}

\title{2-Adic Tree Embeddings: A Modeling and Optimization Framework for Computational Phylogenetics}

\author[1,$\dagger$]{P. Skums}

\affil[1]{School of Computing, University of Connecticut, Storrs, CT, USA}
\affil[$\dagger$]{Email: pavel.skums@uconn.edu}

\date{}

\begin{document}
\maketitle

\begin{abstract}
Phylogenetic inference is a fundamental problem in computational biology, with applications ranging from epidemiology and oncology to developmental biology and comparative genomics. Many major phylogenetic inference problems are NP-hard; consequently, existing methods often explore tree space using local search or MCMC sampling based on a relatively small repertoire of combinatorial rearrangements. The size and ruggedness of tree space make it difficult for such general-purpose heuristics to reliably reach optimal or near-optimal solutions, creating a major bottleneck in computational phylogenetics. Recently, there has been growing interest in alternative approaches that embed trees into numerical spaces and apply numerical optimization techniques. Although promising, such methods are not yet broadly practical and remain limited in their applicability.

We propose a new framework for phylogenetic inference based on embeddings of trees into the non-Archimedean space of $2$-adic numbers. The hierarchical organization of $2$-adic space naturally encodes tree topologies and provides algebraic representations of tree metrics and transformations, while its rich topological, algebraic, and measure-theoretic structure provides a natural foundation for a coherent theory of tree calculus and optimization.

This paper is the first in a series intended to develop the theory and applications of $2$-adic tree embeddings. Here, we introduce the concept, establish several of its topological, probabilistic, and algebraic properties, and show that $2$-adic embeddings support natural representations of tree distributions, algebraic operations, and broad classes of tree rearrangements that extend standard operations such as NNI, SPR, and TBR. 

Building on these results, we introduce a general local search framework in which tree neighborhoods are generated by parametric families of $2$-adic polynomials. We show that even infinitely many transformations induce only quadratically many distinct neighboring trees, which can be generated in polynomial time. These results provide a mathematically rigorous and flexible framework for phylogenetic inference and suggest a new approach to representing and exploring tree space.
\end{abstract}

\section{Introduction}

Tree optimization, i.e. search for a tree that satisfies specified constraints while optimizing a given objective function, arises in many fields \cite{taha2013operations,bai-etal-2021-syntax,mcdonald2005non,tai-etal-2015-improved}. A major application domain is evolutionary biology, including molecular epidemiology \cite{stadler2013birth}, cancer genomics \cite{lahnemann2020eleven}, developmental biology \cite{stadler2021phylodynamics}, and evolutionary genomics \cite{ingman2000mitochondrial}. In these areas, tree optimization arises primarily in {\it phylogenetic inference}, where evolutionary relationships among a given set of taxa are represented by a tree and the goal is to infer the topology and associated parameters that best explain the observed data under a specified evolutionary model \cite{Felsenstein:Book:2003,semple2003phylogenetics,drummond2015bayesian}. The corresponding objectives range from relatively simple criteria, such as discrepancies between observed distances and tree-induced distances \cite{fitch1967construction,desper2004theoretical}, to likelihood- and posterior-based objectives derived from complex evolutionary, demographic, and epidemiological models \cite{volz2013viral,grenfell2004unifying}.

Many important optimization problems in phylogenetic inference are NP-hard \cite{day1987computational,day1986computational,chor2005maximum,sashittal2023startle,el2018inferring}. Consequently, practical algorithms often explore tree space using local search or Markov chain Monte Carlo (MCMC) sampling. Starting from a current topology, they generate candidate trees through fixed local rearrangements, including nearest-neighbor interchange (NNI), subtree pruning and regrafting (SPR), and tree bisection and reconnection (TBR). Such moves are central to widely used general-purpose phylogenetic software \cite{stamatakis2014raxml,minh2020iq,Guindon:SB:2010,pineiro2020very,drummond2007beast,ronquist2012mrbayes}, as well as specialized tools for particular application domains \cite{jahn2016tree,liu2024nestedbd}. These approaches have proved effective across many applications and remain a computational foundation of modern phylogenetic inference. Over the years, existing methods have been refined extensively to improve computational efficiency via effective usage of multi-core and distributed architectures, vectorization, and GPU acceleration, together with highly optimized algorithms for objective evaluation and parameter optimization.

Yet, efficient exploration of tree space remains a major bottleneck. The principal challenges are the superexponential size and ruggedness of the solution space, which make it difficult for general-purpose heuristics to reliably reach optimal or near-optimal solutions \cite{spirin2024phylobench,zhou2018evaluating,liu2024influence,fabreti2022convergence}.  The progress in this domain has been incremental, as most methods still rely on local search based on a relatively small repertoire of standard tree rearrangements. Further major advances may therefore require a more fundamental change in how tree spaces are represented and explored.

There are, however, alternative ways to formulate the problem. One approach is to encode trees numerically, thereby recasting phylogenetic inference as an optimization problem over numerical variables. A prominent example is Integer Linear Programming (ILP) \cite{gusfield2019integer}. ILP provides a flexible framework for expressing complex combinatorial constraints, can certify global optimality, and benefits from highly optimized commercial and open-source solvers. Its main limitation is scalability: many formulations require large numbers of auxiliary variables and constraints merely to ensure that feasible numerical solutions correspond to valid trees.

Recently, there has been growing interest in alternative compact and computationally practical numerical embeddings of phylogenetic trees \cite{chauve2025vector,penn2025phylo2vec,penn2023leaping,liu2022analyzing} that can support gradient-based or hill-climbing algorithms for exploring tree space more efficiently \cite{penn2023leaping,penn2025phylo2vec}. Most existing approaches use vector or matrix representations embedded in, or relaxed to, Euclidean spaces. Complementary work has considered hyperbolic embeddings, whose geometry is particularly well suited to representing hierarchical structures \cite{matsumoto2021novel,macaulay2023fidelity,mimori2023geophy}.

These approaches are promising, but the field remains largely at the proof-of-concept stage. A major challenge is that current embeddings often fail to induce sufficiently simple or well-behaved objective functions and constraints \cite{macaulay2023fidelity,penn2025phylo2vec}, limiting the use of established mathematical and computational optimization techniques. As a result, some methods remain conceptually close to traditional tree-search heuristics and inherit many of their limitations \cite{penn2025phylo2vec}, whereas others are computationally expensive or applicable only in restricted settings \cite{penn2023leaping}.


In this paper, we introduce a new approach to tree representation and optimization based on embeddings into the non-Archimedean space of {\it $p$-adic numbers}. For a prime $p$, the field $\mathbb{Q}_p$ of $p$-adic numbers is the completion of the rational numbers with respect to the $p$-adic norm $|x|_p$, which measures, roughly speaking, the divisibility of $x$ by $p$ \cite{koblitz2012p,Schikhof_1985,gouvea1997p,robert2013course}. The $p$-adic numbers play a central role in algebra and number theory and have also found applications in areas ranging from theoretical physics \cite{brekke1993p,dragovich2009p} to theoretical biology \cite{dragovich2009p,torresblanca20265} and cognitive science \cite{khrennikov2004information}. Several previous studies have explored $p$-adic methods in genomics and phylogenetics, particularly $p$-adic representations of the genetic code \cite{dragovich2009p,torresblanca20265,khrennikov2007genetic,dragovich2019p}. However, to the best of our knowledge, $p$-adic analysis has not been systematically developed as a practical computational framework for tree optimization and related problems in bioinformatics and discrete optimization. 

Topologically, the space $\mathbb{Q}_p$ has an intrinsic hierarchical structure induced by its nested ultrametric balls, naturally supporting tree embeddings. 
More importantly, its combined topological, algebraic, and measure-theoretic structure provides a natural foundation for a coherent theory of tree calculus and optimization. 

In this paper, we focus primarily on binary trees, the standard setting in phylogenetics, and therefore concentrate on $2$-adic numbers. The underlying methodology, however, is expected to extend naturally to other $p$-adic spaces and to broader classes of tree optimization problems.

The paper is organized into several parts that develop the topological, measure-theoretic, and algebraic properties of $2$-adic tree embeddings and present a general methodology for efficient tree inference based on $2$-adic local search. The paper is structured as follows:

\begin{itemize}

\item[1)] We introduce {\it $2$-adic tree embeddings} of rooted binary trees and establish their basic properties, including the equivalence of alternative definitions and the existence and uniqueness of embeddings satisfying prescribed conditions.

\item[2)] We study the {\it measure-theoretic properties} of $2$-adic tree embeddings. In particular, for a given tree, we analyze the subspace of its $2$-adic embeddings and estimate its Haar measure, which turns out to be closely related to the Sackin index, a widely used measure of tree balance. We further show that every Aldous $\beta$-splitting distribution with $\beta>-1$, including the Yule distribution and other biologically relevant models, can be realized by a suitable probability distribution on the $2$-adic space: independent sampling from this distribution yields an embedding of a random tree with the corresponding $\beta$-splitting distribution.

\item[3)] We investigate {\it algebraic operations} on tree embeddings, focusing on the asymptotic behavior of large sums and Hadamard products. These operations exhibit unexpectedly different limiting behavior, with the two limits concentrated at opposite ends of the tree-balance spectrum. The sum of $n$ tree embeddings drawn independently from any non-degenerate $2$-adic distribution converges, as $n\to\infty$, to the Haar tree distribution, which favors balanced trees, thus providing a natural tree analogue of the Central Limit Theorem. In contrast, the Hadamard product of $n$ independent, identically Haar-distributed tree embeddings converges to the uniform distribution on caterpillar trees, which are maximally unbalanced.

\item[4)] We introduce the notions of {\it polynomial and locally affine tree  transformations}. We show that standard tree rearrangements (e.g., NNI and SPR) can be realized by such algebraic transformations. More generally, we prove that any operation generalizing NNI, SPR, and TBR by rearranging subtrees while preserving their internal structure admits such realization.

\item[5)] Building on these results, we propose a general scheme for {\it local search in tree space}, in which the neighborhood of a given tree is defined by its images under $2$-adic polynomials from a specified parametric family. This framework generalizes standard phylogenetic local search and provides greater flexibility in adapting the neighborhood structure to particular inference problems. For a broad class of such transformation families, we prove that, even when the family itself is infinite, the neighborhood of any $n$-leaf tree contains $O(n^2)$ distinct trees and can be generated in polynomial time. We also provide an explicit algorithm for generating this neighborhood. 
\end{itemize}

Together, these results lay the theoretical foundations for a general and mathematically rigorous framework for phylogenetic inference, applicable to a broad range of problems.
 

 \vspace{-2ex}
\section{The concept of $2$-adic embeddings of binary trees}
\vspace{-1ex}

\subsection{Basic definitions}\label{sec:def}
\vspace{-1ex}

Let $\mathbb{N}$, $\mathbb{Z}$ and $\mathbb{Q}$ be the sets of natural, integer and rational numbers, respectively. Every number $x \in \mathbb{Q}$ can be uniquely expressed as

\vspace{-1ex}
\begin{equation}
x = 2^{k}\frac{a}{b}    
\end{equation}

\noindent
where $k \in \mathbb{Z}$ and $a,b \in \mathbb{N}$ are odd. The integer $k$ is called the {\it $2$-adic valuation} of $x$ and is denoted by $v_2(x)$.
The {\it $2$-adic norm} of $x$ is defined as $|x|_2 = 2^{-v_2(x)}$; intuitively, it measures how divisible $x$ is by $2$.
This norm satisfies the {\it ultrametric inequality}, a stronger form of the usual triangle inequality:

\vspace{-2ex}
\begin{equation}\label{eq:triangnorm}
    |x+y|_2 \leq \max \{|x|_2,|y|_2\} \;\;\; \text{ for any } x,y \in \mathbb{Q}.
\end{equation}

\noindent
This inequality can be rewritten in terms of valuations as

\begin{equation}\label{eq:triangval}
    v_2(x+y) \geq \min \{v_2(x),v_2(y)\} \;\;\; \text{ for any } x,y \in \mathbb{Q}.
\end{equation}

\noindent
Moreover, if $|x|_2 \ne |y|_2$, then inequalities (\ref{eq:triangnorm}) and (\ref{eq:triangval}) become equalities.

The set of {\it $2$-adic numbers} $\mathbb{Q}_2$ is defined as the completion of rational numbers with respect to the $2$-adic norm $|\cdot|_2$. Every $2$-adic number can be uniquely expressed as an infinite series, called its {\it $2$-adic expansion}:

\vspace{-1ex}
\begin{equation}\label{eq:padicexpans}
    x = \sum_{i=-k}^{\infty} a_i2^i: a_i \in \{0,1\}, k\in \mathbb{N}.
\end{equation}

\noindent
Unlike binary expansions of real numbers, the $2$-adic expansion \eqref{eq:padicexpans} contains only finitely many digits after the radix point and infinitely many digits before it.


We will primarily use the subspace $\mathbb{Z}_2 \subseteq \mathbb{Q}_2$ of {\it $2$-adic integers}, defined as the set of $2$-adic numbers whose expansions involve only non-negative powers of $2$:

\vspace{-1ex}
\begin{equation}
    \mathbb{Z}_2 = \left\{z = \sum_{i=0}^{\infty} a_i2^i: a_i \in \{0,1\}\right\} = \{z\in \mathbb{Q}_2: |z|_2 \leq 1\}.
\end{equation}


$\mathbb{Z}_2$ contains all ordinary integers but is strictly larger ($\mathbb{Z} \subset \mathbb{Z}_2$): for example, numbers such as $\tfrac{1}{3}$ and $\sqrt{17}$ are 2-adic integers.  More generally, a rational number $\tfrac{a}{b} \in \mathbb{Q}$ is a $2$-adic integer whenever the denominator $b$ is odd. In general, the set of invertible 2-adic integers, denotes as $\mathbb{Z}_2^{\times}$, strictly contains the set of odd integers

A {\it ball} $B_r(x)$ with center $x$ and radius $r$ is defined in the usual way as $B_r(x) = \{y\in \mathbb{Z}_2 : |x-y|_2 \leq r\}$. Balls in an ultrametric space are hierarchically organized and possess properties that distinguish them from those in standard metric spaces. In particular, every point inside a ball is its center (i.e. $B_r(x) = B_r(y)$ for any $y \in B_r(x)$), and the set of 2-adic balls $\mathcal{B}(\mathbb{Z}_2)$ form a {\it laminar family}, i.e. any two balls are either disjoint or one is a subset of the other. 

Furthermore, balls in $\mathbb{Z}_2$ coincide with residue classes modulo $2^k$; that is, each ball can be represented in the canonical form

\vspace{-1ex}
\begin{equation}
    B_{2^{-k}}(x) = a + 2^k\mathbb{Z}_2 = \{a + 2^k y: y\in \mathbb{Z}_2 \},
\end{equation}

\noindent 
where $k\geq 0$ and $a\in \{0,1,...,2^k-1\}$ is the integer such that $|x-a|_2 \leq 2^{-k}$ (Fig. \ref{fig:pspace}).

\begin{wrapfigure}{r}{0.5\textwidth}
  \begin{center}
    \includegraphics[width=0.48\textwidth]{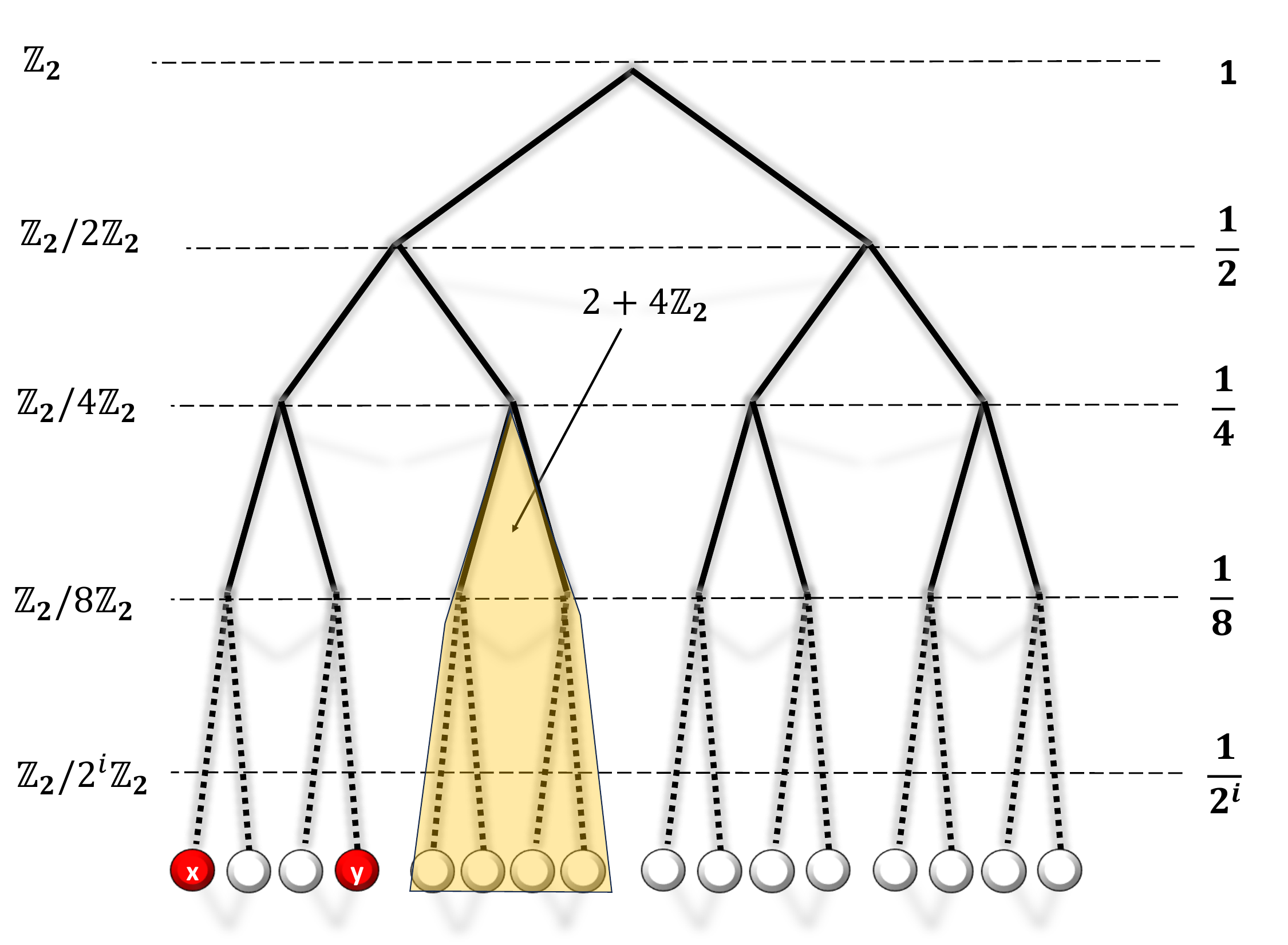}
  \end{center}
  \caption{\small
The space $\mathbb{Z}_2$ represented as an infinite binary tree $\mathcal{Z}$. 
For the two highlighted leaves $x, y \in \mathbb{Z}_2$, we have $|x - y|_2 = 1/4$. 
The ball $B = 2 + 4\mathbb{Z}_2$ is highlighted in orange. Dashed horizontal lines indicate successive $2$-adic scales.} \label{fig:pspace}
\end{wrapfigure}

For $X\subseteq \mathbb{Z}_2$, $B(X)$ is the minimal ball containing $X$. In particular, we denote by $B(x,y)$ the minimal ball containing both $x$ and $y$: $B(x,y) = B_{|x-y|_2}(x) = B_{|x-y|_2}(y)$. Laminarity of 2-adic balls imply that such ball is unique. Note also that every ball $B = a+2^k\mathbb{Z}_2$ can be represented as the union of two maximal disjoint balls $B^{(0)} = a+2^{k+1}\mathbb{Z}_2$ and $B^{(1)} = a+2^k+2^{k+1}\mathbb{Z}_2$; the latter will be referred to as the {\it left and right children balls} of $B$, respectively.

A {\it branching ball of a finite set} $X\subset\mathbb{Z}_2$ is a ball whose two children balls both intersect $X$. The laminarity of 2-adic balls and the fact that every 2-adic ball has exactly two children balls imply that there are exactly $|X|-1$ branching balls. The branching ball of a vector $\bm{s} = (s_1,...,s_n)$ is defined as a branching ball of the underlying set $\{s_1,...,s_n\}$

Besides being a metric space, $\mathbb{Z}_2$ is a {\it ring}, where addition, subtraction, multiplication, and division (when possible) are defined similarly to real numbers, and the $2$-adic norm determines convergence of infinite series (for example, in $\mathbb{Z}_2$, one has $-1 = ...1111 =  \sum_{i=0}^{\infty} 2^i$). Geometrically, $\mathbb Z_2$ can be visualized as an infinite rooted binary tree $\mathcal{Z}_2$ (called {\it the Bruhat-Tits or 2-adic tree}), whose leaves correspond to $2$-adic integers and internal nodes to $2$-adic balls (Fig.~\ref{fig:pspace}). 

Finally, $\mathbb{Z}_2$ is a probability space, with a {\it Haar probability measure} $\mu$ defined on balls of $\mathbb{Z}_2$ as follows:

\begin{equation}
    \mu(a + 2^k\mathbb{Z}_2) = 2^{-k}.
\end{equation}




\vspace{-2ex}
\subsection{$2$-adic embeddings of trees and their basic properties}\label{sec:embed}
\vspace{-1ex}

Consider a rooted binary tree $T = (V(T),E(T))$ with the leaf set $L(T) = [n]$, internal nodes $I(T)$ and the root $r(T)$. For $v\in I(T)$, the set $C_v = C_{T,v}$ of leaf descendants of $v$ is {\it a clade}; the set of all clades is denoted by $\mathcal{C}(T)$.

Let $\mathbf{s}=(s_1,\ldots,s_n)\in\mathbb{Z}_2^n$ be a vector of distinct $2$-adic integers. It defines a tree $\mathcal{T}(\mathbf{s})$, which admits several equivalent descriptions:

\begin{itemize}
    \item[(a)] $\mathcal{T}(\mathbf{s})$ is obtained by {\it smoothing} the subtree of the $2$-adic tree $\mathcal{Z}_2$ spanned by $s_1,\ldots,s_n$, i.e., by suppressing all vertices of outdegree one  (Fig.~\ref{fig:embed}, left). 
    
    \item[(b)] $\mathcal{T}(\mathbf{s})$ is the {\it hierarchical clustering tree} of $s_1,\ldots,s_n$ under the $2$-adic metric. By ultrametricity, minimum, maximum, and average linkage induce the same hierarchy.

    \item[(c)] $\mathcal{T}(\mathbf{s})$ is the {\it binary radix tree} of the $2$-adic expansions of $s_1,\ldots,s_n$, viewed as binary words. 
\end{itemize}

We say that $\bm{s}$ is {\it a point embedding} of the tree $T$ if $T = \mathcal{T}(\mathbf{s})$. A similar construction was used in \cite{bradley2009mumford,kato2005non}.

\begin{figure}
\begin{center}
\includegraphics[width=0.8\textwidth] {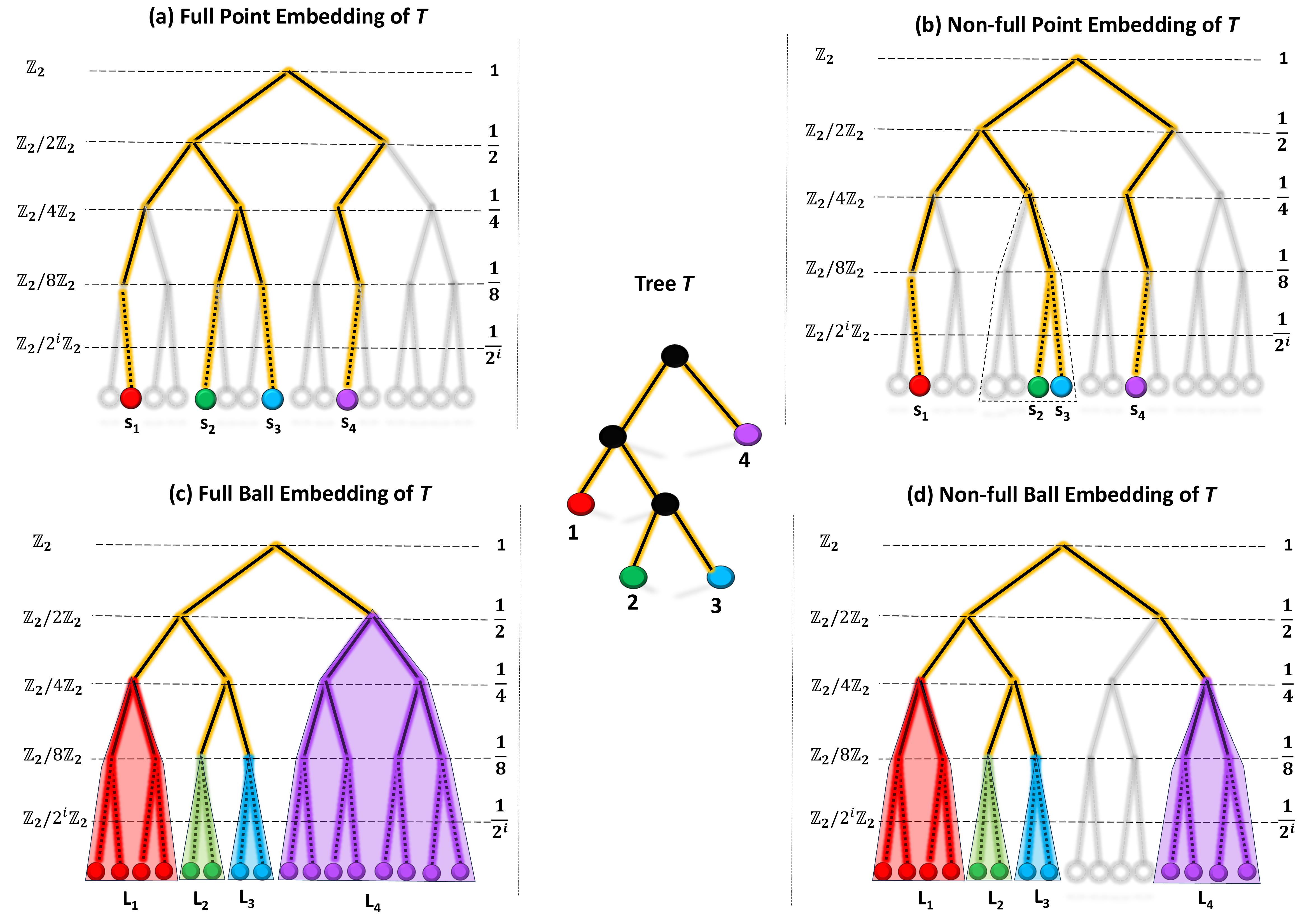}
\caption{\small $2$-adic embeddings of the same rooted binary tree $T$ ({\bf center}). {\bf (a)} A full point embedding $\bm{s}=(s_1,s_2,s_3,s_4)$, in which each leaf $i$ is represented by a coordinate $s_i\in\mathbb{Z}_2$. {\bf (b)} A non-full point embedding; the ball violating the fullness condition is indicated by a dashed line. {\bf (c)} A full ball embedding $\mathcal{L}=(L_1,L_2,L_3,L_4)$, in which each leaf $i$ is represented by a ball $L_i$ (shown in color). {\bf (d)} A non-full ball embedding, with $L_1\sqcup\cdots\sqcup L_4\neq\mathbb{Z}_2$.
\vspace{-4ex}}\label{fig:embed}
\end{center}
\end{figure}

A tree embedding can also be defined by associating its leaves with a collection of pairwise disjoint balls $\mathcal{L}={L_1,\ldots,L_n}$. Define

\begin{equation}\label{eq:clustEmbed}
    \mathcal{C}(\mathcal{L}) = \{A \subseteq [n] : B\left(\bigcup_{i\in A} L_i\right)\bigcap L_j = \emptyset \text{ for all } j\not\in A\}
\end{equation}

\noindent
We say that $\mathcal{L}$ is a {\it ball embedding} of $T$ if $\mathcal{C}(\mathcal{L})$ coincides with the clade system of $T$. Equivalently, $T=\mathcal{T}(\mathcal{L})$ is the Hasse diagram of $\mathcal{C}(\mathcal{L})$ under inclusion, with $[n]$ as the root and the singletons as leaves.
 
The ball embedding $\mathcal{L}$ is {\it full} if $\bigcup_{i=1}^nL_i=\mathbb{Z}_2$. Similarly, a point embedding $\bm{s}$ is {\it full} if every ball containing at least two coordinates of $\bm{s}$ is a branching ball. Equivalently, the subtree of $\mathcal{Z}_2$ spanned by $s_1,\ldots,s_n$ contains no vertices of outdegree one on paths between branching vertices.


We call a ball embedding $\mathcal{L}=(L_1,\ldots,L_n)$ and a point embedding $\bm{s}=(s_1,\ldots,s_n)$ {\it equivalent} if $s_i\in L_i$ for all $i$. Every ball embedding clearly admits an equivalent point embedding. Conversely, for any point embedding $\bm{s}=(s_1,\ldots,s_n)\in\mathbb{Z}_2^n$ with distinct coordinates, one can choose $k$ sufficiently large so that the balls
$L_i=s_i+2^k\mathbb{Z}_2$
are pairwise disjoint, yielding an equivalent ball embedding.

Several basic properties follow directly from the definitions:

\begin{theorem}\label{thm:embedBasic} The following properties of 2-adic embeddings hold:

    \begin{itemize}
        \item[1)] Equivalent point and ball embeddings induce the same tree: if $\mathcal{L}$ and $\bm{s}$ are equivalent, then $\mathcal{T}(\mathcal{L})=\mathcal{T}(\bm{s})$.
        
        \item[2)] If $\mathcal{L}$ and $\bm{s}$ are equivalent and $\mathcal{L}$ is full, then $\bm{s}$ is also full.

        \item[3)] Every full rooted binary tree $T$ admits a full ball embedding and, consequently, a full point embedding.
    \end{itemize}
\end{theorem}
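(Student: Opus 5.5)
The plan is to reduce all three claims to one observation about balls. For a set $A\subseteq[n]$, the minimal ball containing $\bigcup_{i\in A}L_i$ coincides with the minimal ball $B(\{s_i\}_{i\in A})$ containing the representative points.

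To see this, let $B'=B(\{s_i: i\in A\})$. Then $B'\subseteq B(\bigcup_{i\in A}L_i)$, since the $s_i$ lie in the $L_i$. For the reverse inclusion we use laminarity. If $|A|\ge 2$, then $B'$ contains two points $s_i, s_j$ lying in the disjoint balls $L_i, L_j$. So $B'$ cannot be contained in $L_i$. Since each $L_i$ meets $B'$ at $s_i$, laminarity forces $L_i\subseteq B'$. Hence $B'\supseteq\bigcup_{i\in A} L_i$.

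With this identification, definition (\ref{eq:clustEmbed}) says the following. For $|A|\ge2$, $A\in\mathcal{C}(\mathcal{L})$ iff $B(\{s_i\}_{i\in A})$ contains no $s_j$ with $j\notin A$. The forward direction holds because $s_j\in L_j$. For the converse, if $L_j$ meets the ball, then by laminarity and disjointness from the $L_i$ with $i\in A$, $L_j$ lies inside the ball, so $s_j$ does too. These are exactly the clusters of the $2$-adic hierarchical clustering, i.e. the clades of $\mathcal{T}(\bm{s})$ in description (b); singletons and $[n]$ match trivially. This proves 1).

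For 2), suppose a ball $B$ contains at least two coordinates of $\bm{s}$ but is not branching. Then one child ball, say $B^{(0)}$, contains all the $s_i$ lying in $B$.

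Fullness of $\mathcal{L}$ gives $B\subseteq\bigcup L_i$. The $L_i$ meeting $B$ are either supersets of $B$ or subsets of it. A superset is impossible, because $B$ contains two points from different disjoint balls. So $B$ is partitioned by those $L_i\subseteq B$, and each such $L_i$ contains its $s_i$, which lies in $B^{(0)}$. By laminarity, $L_i$ lies in $B^{(0)}$ or contains it. It cannot contain $B^{(0)}$: then $L_i$ would contain the other $s_j$ lying in $B^{(0)}$, of which there are at least two. So every $L_i$ meeting $B$ lies in $B^{(0)}$, and $B^{(1)}$ is uncovered, a contradiction.

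For 3), we construct the embedding by induction top-down. Assign the root ball $\mathbb{Z}_2$ to the root. If an internal node $v$ has been assigned a ball $B_v$, assign its two children the child balls $B_v^{(0)}$ and $B_v^{(1)}$. The leaf balls $L_i$ are then pairwise disjoint. By induction, the leaf balls below $v$ partition $B_v$, so they cover $\mathbb{Z}_2$ and the embedding is full.

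It remains to check $\mathcal{C}(\mathcal{L})=\mathcal{C}(T)$. For a clade $C_v$, the union of its leaf balls is $B_v$ and avoids the other leaf balls, so $C_v\in\mathcal{C}(\mathcal{L})$. Conversely, let $A\in\mathcal{C}(\mathcal{L})$. The minimal ball containing $\bigcup_{i\in A}L_i$ is $B_w$ for the lowest common ancestor $w$ of $A$; this is the main point to check, using that the two children of $w$ split $A$ between $B_w^{(0)}$ and $B_w^{(1)}$. That ball contains all $L_j$ with $j\in C_w$, so the condition in (\ref{eq:clustEmbed}) forces $A=C_w$. An equivalent point embedding then exists, and by 2) it is full.

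The main obstacle is the bookkeeping with laminarity in 2). The key fact is that no leaf ball can be strictly larger than a ball containing two distinct representatives.
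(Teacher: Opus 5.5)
Your proposal is correct and follows the paper's proof in all three parts. The same identity $B(\{s_i\}_{i\in A})=B(\bigcup_{i\in A}L_i)$ drives 1). Part 2) is the same laminarity contradiction, just run from the other side: you show every leaf ball meeting $B$ lies in $B^{(0)}$, whereas the paper shows a leaf ball meeting $B^{(1)}$ must contain $B$. Part 3) uses the same dyadic assignment of child balls, with two small differences: you obtain coverage of $\mathbb{Z}_2$ by induction on subtrees rather than via the Kraft equality and Haar measure, and you verify both inclusions of $\mathcal{C}(\mathcal{L})=\mathcal{C}(T)$ explicitly through the lowest common ancestor, which the paper leaves implicit.
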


\begin{proof}
In what follows, for $A\subseteq[n]$, let $S_A={s_i\in A}$ denote the corresponding set of embedded points.

    1) By the definition of a point embedding, a non-singleton set $A\subseteq[n]$ is a clade of $\mathcal{T}(\bm{s})$ if and only if
    $B(S_A)$ contains no $s_j$ with $j\notin A$. 
    Furthermore, laminarity of $2$-adic balls gives

    \begin{equation}\label{eq:equalBalls}
        B(S_A) = B\left(\bigcup_{i\in A} L_i\right) \text{ for any } A\subseteq [n], |A| \geq 2.
    \end{equation}

\noindent
Indeed, $B(S_A)$ intersects each $L_i$, $i\in A$. Because $|A|\geq2$ and the balls $L_i$ are pairwise disjoint, $B(S_A)$ cannot be contained in any $L_i$. By laminarity of $2$-adic balls, it therefore contains every $L_i$, and hence
$\bigcup_{i\in A}L_i\subseteq B(S_A)$. The reverse inclusion follows from the minimality of $B(S_A)$.

Therefore, for every non-singleton $A\subseteq[n]$, the ball on
the left of \eqref{eq:equalBalls} contains some $s_j$, $j\notin A$, if and only if the ball on the right intersects $L_j$. Thus $\mathbf{s}$ and $\mathcal{L}$ induce the same clades of the respective trees, and hence $\mathcal T(\mathbf s)=\mathcal T(\mathcal L)$.

2) Let $B$ be a ball such that $X = B\cap \{s_1,\ldots,s_n\}$ has at least two elements $s_i$ and $s_j$. Suppose that $X\subseteq B^{(0)}$ and $X\cap B^{(1)} = \emptyset$. Since the pairwise disjoint balls $L_i$ cover $\mathbb{Z}_2$, $B_1$ intersects  some $L_k$. Laminarity of 2-adic balls and the fact that $s_k\in L_k \setminus B^{(1)}$ thus imply that $B^{(1)}\subset L_k$. Now, since $B^{(1)}$ is a maximal sub-ball of $B$, we have $B \subseteq L_k$ and thus $L_k\ni s_i,s_j$, a contradiction.

3) Label the two outgoing edges of every internal vertex of $T$ by $0$ and $1$. For each vertex $v\in V(T)$ with the depth $d_v$, let $\bm{\alpha}^v = (\alpha^v_0,...,\alpha^v_{d_v-1})$ be the sequence of edge labels along the path from the root to $v$.  In particular, the words $\bm{\alpha}^i$ corresponding to the leaves form a complete prefix code \cite{cover1999elements}. Thus, by the Kraft equality \cite{cover1999elements} $\sum_{i=1}^n2^{-d_i}=1$.

Identify each vertex $v$ with the ball $L_v= \left(\sum_{r=0}^{d_v-1}\alpha^v_r 2^r\right) +2^{d_v}\mathbb Z_2$. The leaf balls $L_1,\ldots,L_n$ are pairwise disjoint and $\mu(L_i) = 2^{-d_i}$. The Kraft equality above imply that $\sum_{i=1}^n \mu(L_i) = 1$, and thus $L_1\sqcup\ldots \sqcup L_n = \mathbb{Z}_2$. So, the embedding $\mathcal{L} = \{L_1,...,L_n\}$ is full.

Furthermore, a leaf $i$ is a descendant of $v$ if and only if $\bm{\alpha}^v$ is a prefix of $\bm{\alpha}^i$, which is equivalent to $L_i\subseteq L_v$.  Therefore the
set of leaf balls contained in $L_v$ is exactly the clade of leaves descended from $v$.  Hence the clade family induced by
$L_1,\ldots,L_n$ coincides with the clade family of $T$. \end{proof}


Given the equivalence of the embedding variants, we will omit the embedding type when it is clear from context.

Any given tree $T$ admits many different embeddings. To associate $T$ with a single distinguished embedding, we define the {\it canonical point embedding} $\bm{s}=(s_1,\ldots,s_n)$ by the following conditions: 

\begin{itemize}
    \item[1)] $\bm{s}$ is full. 

    \item[2)] For any branching ball $B$ of $\bm{s}$ with left and right children $B^{(0)}$ and $B^{(1)}$, respectively, $\min\{i: s_i\in B^{(0)}\} < \min\{j: s_j\in B^{(1)}\}$.

    \item[3)]  For every $i\in [n]$, $0\leq s_i < 2^{1 + \max_{j\ne i}v_2(s_i - s_j)}$.
\end{itemize}

\begin{proposition}
    For every rooted labeled binary tree $T$, its canonical point embedding is unique.
\end{proposition}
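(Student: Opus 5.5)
The plan is to show that conditions 1)–3) determine every $2$-adic digit of every coordinate $s_i$. Conditions 1) and 2) will fix the first $d_i$ digits of $s_i$, where $d_i$ is the depth of leaf $i$ in $T$. Condition 3) will then force all remaining digits to vanish. I read condition 3) as saying that $s_i$ is an ordinary integer in $\{0,1,\ldots,2^{m_i+1}-1\}$ with $m_i=\max_{j\ne i}v_2(s_i-s_j)$, since the order relation only makes sense for such $s_i$.

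\textbf{Step 1 (structural claim).} Let $\bm{s}$ be any full point embedding of $T$. I will prove by induction on depth that each internal vertex $v$ of $T$ at depth $d_v$ corresponds to a ball $B_v=a_v+2^{d_v}\mathbb{Z}_2$ with the following property: $B_v\cap\{s_1,\ldots,s_n\}=\{s_i: i\in C_v\}$, and $B_v$ is a branching ball whose two children balls contain exactly the leaf sets of the two children of $v$.

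For the base case, $\mathbb{Z}_2$ contains all $n\ge 2$ points, so fullness forces it to be branching. Hence $B_{r(T)}=\mathbb{Z}_2$, with $d_{r(T)}=0$. Its two children balls split the points into two nonempty sets. Each of these sets is a clade of $\mathcal{T}(\bm{s})=T$: each child ball equals the minimal ball of its points, because fullness prevents that minimal ball from shrinking strictly below the child ball. So the two sets are the root's two child clades.

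For the inductive step, a child ball containing at least two points is again branching, by fullness, and has radius exactly $2^{-(d_v+1)}$. The argument above then repeats.

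It follows that a leaf $i$ at depth $d_i$ is alone in a ball of radius $2^{-d_i}$. Its parent ball has radius $2^{-(d_i-1)}$ and contains some other point $s_j$ lying in the sibling child ball. Therefore $m_i=\max_{j\ne i}v_2(s_i-s_j)=d_i-1$.

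\textbf{Step 2 (digits are forced).} The digit $a^{(i)}_d$ of $s_i$ at position $d<d_i$ records which child ball of the depth-$d$ ancestor ball of leaf $i$ contains $s_i$. Here the left child $B^{(0)}$ corresponds to digit $0$. Condition 2) forces the child containing the smaller minimal leaf label to be the left one. So every such digit is determined by $T$ and its labels alone, and two canonical embeddings agree in the first $d_i$ digits of each $s_i$.

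By Step 1, condition 3) reads $0\le s_i<2^{d_i}$. Thus all digits of $s_i$ at positions $\ge d_i$ are zero, and $s_i=\sum_{r<d_i}a^{(i)}_r2^r$ is uniquely determined.

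\textbf{Existence.} If existence is wanted as well, take the construction in part 3) of Theorem~\ref{thm:embedBasic}. Label each edge pair so that the child containing the smaller minimal leaf receives label $0$, and set $s_i=\sum_{r<d_i}\alpha^i_r2^r$. By part 2) of that theorem the result is full, and it satisfies conditions 2) and 3) by the same computation of $m_i$.

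\textbf{Main obstacle.} The delicate point is Step 1: fullness must force the branching balls to sit at consecutive radii $1,2^{-1},2^{-2},\ldots$ along each root-to-leaf path. I would handle this by noting that the minimal ball of a child clade is contained in the corresponding child ball. If it were strictly smaller, the child ball would contain at least two points while only one of its own children was occupied, contradicting fullness.
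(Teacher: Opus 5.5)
Your proof is correct and follows essentially the same route as the paper. Fullness forces the branching balls containing $s_i$ to be $s_i+2^k\mathbb{Z}_2$ for $k=0,\ldots,d_i-1$, so $d_i=1+\max_{j\ne i}v_2(s_i-s_j)$; condition 2) then fixes the first $d_i$ digits, and condition 3) forces the remaining digits to vanish. Your top-down induction in Step 1 just makes explicit the vertex-to-branching-ball correspondence that the paper invokes directly when it counts the branching balls along the root-to-leaf path.
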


\begin{proof}

Let $\bm{s}$ be a canonical embedding of $T$, and let $d_i$ be the depth of leaf $i$ in $T$.
The branching balls containing $s_i$ correspond exactly to the internal vertices on the root-to-$i$ path and hence there are $d_i$ of them.

Let $q_i=\max_{j\ne i}v_2(s_i-s_j)$, and $s_{i^*}$ is the coordinate where the maximum is achieved. Then the balls
$s_i+2^k\mathbb{Z}_2$, $k=0,\ldots,q_i$
contain at least two coordinates ($s_i$ and $s_{i^*}$). Since $\bm{s}$ is full, all of them are branching balls. The ball $s_i+2^{q_i+1}\mathbb{Z}_2$, on the other hand, contains no coordinate other than
$s_i$. Hence there are exactly $q_i+1$ branching balls containing $s_i$, and therefore $d_i=1+\max_{j\ne i}v_2(s_i-s_j)$.

Now consider the nested sequence $B_{i,0}\supset B_{i,1}\supset\cdots\supset B_{i,d_i-1}$
of successive branching balls containing $s_i$, and let $B_{i,d_i}$ be the child of $B_{i,d_i-1}$
containing $s_i$. By fullness, $B_{i,k+1}$ is a child of $B_{i,k}$ for every $k=0,\ldots,d_i-1$.

For each $k$, condition~2) uniquely determines whether $B_{i,k+1}$ is a left or right child of $B_{i,k}$. Indeed, if $C$ is a clade of $T$ corresponding to $B_{i,k}$, then $B_{i,k+1}$ is the left child if and only if $i$ belongs to the same maximal sub-clade of $C$ as the minimal element of $C$. Therefore the digits

\begin{equation}\label{eq:canon}
    a_{i,k}=
\begin{cases}
0,& B_{i,k+1}=B_{i,k}^{(0)},\\
1,& B_{i,k+1}=B_{i,k}^{(1)}
\end{cases}
\end{equation}

\noindent
are uniquely determined by $T$.

Consider $s_i' = \sum_{k=0}^{d_i-1}a_{i,k}2^k$. By the construction, $s_i' \equiv s_i \pmod{2^{d_i}}$ and $0 \leq s_i' \leq 2^{d_i}$. Therefore, the condition 3) implies that $s_i = s_i'$. Thus every coordinate $s_i$ is uniquely determined by $T$, and therefore so is the canonical embedding $\bm{s}$. The existence of such embedding follows from the fact that it can be constructed from a full embedding using the formula (\ref{eq:canon}). \end{proof}

\vspace{-2ex}
\section{Probabilistic properties of tree embeddings}\label{sec:measures}
\vspace{-1ex}

For a rooted binary tree $T$, let
$\mathcal{E}(T)=\{\bm{s}\in\mathbb{Z}_2^n:\mathcal{T}(\bm{s})=T\}$
denote the set of all point embeddings of $T$, and let
$\mathcal{E}_F(T)$
denotes the subset of $\mathcal{E}(T)$ consisting of full embeddings. Haar measures of these sets define the {\it Haar distribution} and {\it full Haar distribution} on rooted binary trees:
\begin{equation}
\chi(T):=\mu^n(\mathcal{E}(T)), \qquad
\chi_F(T):=\mu^n(\mathcal{E}_F(T)).
\end{equation}

\begin{theorem}\label{thm:measTreePoints}
    Let $T$ be a tree, and let $n_v = |C_{T,v}|$, $v\in I(T)$. Haar and full Haar tree distributions satisfy

    \begin{equation}\label{eq:probFixTree}
        \chi(T) = P(\mathcal{T}(\bm{s}) = T | s_1,...,s_n \stackrel{\text{i.i.d.}}{\sim} \mu) = \prod_{v\in I(T)}\frac{1}{2^{n_v-1}-1},
    \end{equation}

    \begin{equation}\label{eq:probFixFullTree}
        \chi_F(T) =  \prod_{v\in I(T)}\frac{1}{2^{n_v-1}},
    \end{equation}
\end{theorem}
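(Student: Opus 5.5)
We argue by induction on $n$, using the recursive structure of $\mathbb{Z}_2$: conditioning on the digits $a_0$ of all coordinates splits $\mathbb{Z}_2$ into the two children balls $2\mathbb{Z}_2$ and $1+2\mathbb{Z}_2$, each of which is an affinely rescaled copy of $\mathbb{Z}_2$ (via $x\mapsto (x-a_0)/2$), and the Haar measure restricted and renormalized on each child is again Haar. Let the root of $T$ split $[n]$ into clades $A$ and $A'$ with $|A|=k$, $|A'|=n-k$, $1\le k\le n-1$, and let $T_A,T_{A'}$ be the corresponding subtrees. The base case $n=1$ (empty product, probability $1$) is trivial; note also that distinctness of coordinates holds almost surely, so it can be ignored.

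For $\chi(T)$, the event $\mathcal{T}(\bm s)=T$ is equivalent to the following: for the root branching ball $B(S_{[n]})=b+2^m\mathbb{Z}_2$, the points indexed by $A$ lie in one child and those indexed by $A'$ in the other, and, after rescaling each child to $\mathbb{Z}_2$, the subvectors realize $T_A$ and $T_{A'}$ respectively. The root ball has some level $m\ge 0$. I will compute the probability level by level: at each level, either all $n$ points share the next digit (probability $2\cdot 2^{-n}=2^{1-n}$), in which case we descend, or they split. The probability that the first split occurs at the current level and separates exactly $A$ from $A'$ is $2\cdot 2^{-n}$ (two ways to assign which side gets digit $0$). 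Summing the geometric series over $m$ gives
\begin{equation*}
\sum_{m\ge 0}\left(2^{1-n}\right)^m 2^{1-n}=\frac{2^{1-n}}{1-2^{1-n}}=\frac{1}{2^{n-1}-1}.
\end{equation*}
Conditioned on this, the points of $A$ are i.i.d.\ Haar in their child ball, hence (after rescaling) i.i.d.\ Haar in $\mathbb{Z}_2$, independent of those of $A'$, and $\mathcal{T}$ commutes with the affine rescaling because it preserves valuations of differences up to a shift. So $\chi(T)=\frac{1}{2^{n-1}-1}\chi(T_A)\chi(T_{A'})$, and induction gives \eqref{eq:probFixTree}. Here the labelled tree $T$ is unordered, and the factor $2$ accounts for both orientations of the split.

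For $\chi_F(T)$ the same recursion applies, but fullness forbids descending without branching. The ball $\mathbb{Z}_2$ itself contains all points and must be branching (for $n\ge2$), so only $m=0$ is allowed, and the factor becomes $2\cdot2^{-n}=2^{-(n-1)}$. One must check that fullness of $\bm s$ is equivalent to fullness of the two rescaled subvectors together with the root being $\mathbb{Z}_2$: every ball containing at least two coordinates is either $\mathbb{Z}_2$ or lies inside one child, and the branching property is preserved under rescaling. Induction then yields \eqref{eq:probFixFullTree}. For $n=1$ the empty product again equals $1$; a singleton vector is vacuously full.

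The main obstacle is justifying rigorously that conditioning on the first-digit pattern leaves the subvectors i.i.d.\ Haar and independent, and that the tree and fullness decompose through the rescaling. Both follow from the fact that the digits $a_i$ of a Haar-random element are i.i.d.\ fair bits, which I will state first and then use throughout.
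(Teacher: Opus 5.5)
Your proof is correct. For $\chi_F(T)$ it is essentially the paper's argument: exactly $2$ of the $2^{n_v}$ digit patterns at depth $d_v$ realize the prescribed unordered split, the higher digits stay i.i.d.\ fair bits, and one recurses. You are more explicit than the paper in checking that fullness decomposes as ``$\mathbb{Z}_2$ is branching'' plus fullness of the two rescaled subvectors.

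For $\chi(T)$ your route is genuinely different. The paper does not derive a recursion. It identifies $\mathcal{T}(\bm s)$ with the PATRICIA compression of the random binary trie and imports the known formula $n!\prod_{v}(2^{n_v}-2)^{-1}$ for plane unlabelled PATRICIA trees. It then converts to labelled non-plane trees by removing the $n!$ and multiplying by $2^{n-1}$.

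You instead decompose at the root branching ball. Summing the geometric series over its level gives the factor $\frac{2^{1-n}}{1-2^{1-n}}=\frac{1}{2^{n-1}-1}$, and you recurse using the conditional independence of the two rescaled Haar subvectors. This is exactly the special case $p_B\equiv 1/2$ of the ``trivial splits then first nontrivial split'' argument the paper itself uses for Theorem~\ref{thm:betasplit}.

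What each route buys: yours is self-contained and treats the Haar and $\beta$-splitting results uniformly. It also makes explicit the independence structure that the paper's $\chi_F$ argument uses implicitly. The paper's route is shorter and ties the Haar tree distribution to the existing trie/PATRICIA literature, at the cost of relying on an external lemma and the plane/unlabelled bookkeeping.
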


\begin{proof}

Haar measure on $\mathbb{Z}_2$ is the pushforward of the Bernoulli product measure under the binary expansion map, i.e. every random variable $s_i$ can be represented as 

\begin{equation}\label{eq:2adBern}
        s_i = \sum_{k = 0}^{\infty} \alpha^i_k 2^k, \qquad \alpha^i_k \stackrel{\text{i.i.d.}}{\sim} \text{Bernoulli}(1/2).
\end{equation}

\noindent
Moreover, $\Pr(s_i=s_j)=0$ for $i\neq j$, so the embedding coordinates are distinct almost surely.

The random expansions (\ref{eq:2adBern}) generate the standard symmetric labelled binary trie \cite{fredkin1960trie}, with $\mathcal{T}(\bm{s})$ being its radix (PATRICIA) compression \cite{szpankowski1990patricia}. We then can use the formula for the probability of a plane unlabelled PATRICIA tree derived in \cite[Lemma~1]{ischebeck2026central}:

\begin{equation}
    \Pr(T_{\mathrm{plane,unlab}}) = n!\prod_{v\in I(T)}\frac{1}{2^{n_v}-2}.
\end{equation}

Fixing the leaf labels removes the factor $n!$, while ignoring the left--right ordering multiplies the probability by $2^{n-1}$, thus producing the formula (\ref{eq:probFixTree}). 

For a full embedding, at each node $v\in I(T)$ the prescribed unordered split of its $n_v$ descendant leaves into 2 smaller clades is realized by the digits
$(\alpha^i_{d_v})_{i\in C_v}\in\{0,1\}^{n_v}$, where $d_v$ is the depth of $v$. Exactly 2 of the $2^{n_v}$ possible digit assignments realize the particular split, so its probability is $2^{1-n_v}$. Conditional on this event, all higher-order digits of $s_i$ in the two daughter clades remain independent Bernoulli-\((1/2)\) variables. The same argument therefore applies recursively to the daughter clades, and multiplication over all internal
nodes gives \eqref{eq:probFixFullTree}.\end{proof}

The relations (\ref{eq:probFixTree}) - (\ref{eq:probFixFullTree}) establish connections between Haar tree distributions and {\it Sackin index} $S(T) = \sum_{i=1}^n d_i = \sum_{v\in I(T)} n_v$ -- a tree balance measure widely used in phylogenetics \cite{fischer2023tree}:

\begin{corollary}
    \begin{equation}\label{eq:sackin}
        S(T)-2(n-1) \leq -\log_2(\chi(T)) \leq S(T)-(n-1).
    \end{equation}
    \begin{equation}\label{eq:sackinfull}
        -\log_2(\chi_F(T)) = S(T) - (n-1).
    \end{equation}
\end{corollary}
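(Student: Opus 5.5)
The plan is to take logarithms in the two product formulas of Theorem~\ref{thm:measTreePoints} and compare term by term. We use that $T$ has exactly $n-1$ internal vertices, so $|I(T)|=n-1$. We also use the identity $S(T)=\sum_{v\in I(T)} n_v$ stated before the corollary: each leaf $i$ is counted once for every internal ancestor, and there are $d_i$ of these.

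For the full distribution, (\ref{eq:probFixFullTree}) gives
\[
-\log_2\chi_F(T)=\sum_{v\in I(T)}(n_v-1)=S(T)-|I(T)|=S(T)-(n-1),
\]
which is (\ref{eq:sackinfull}) directly.

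For the Haar distribution, (\ref{eq:probFixTree}) gives $-\log_2\chi(T)=\sum_{v\in I(T)}\log_2\bigl(2^{n_v-1}-1\bigr)$. The work is to sandwich each summand. Every internal vertex has $n_v\ge 2$, so $2^{n_v-1}-1\ge 1$ and each term is well defined and nonnegative. The upper bound is immediate from
\[
2^{n_v-1}-1<2^{n_v-1}, \quad\text{so}\quad \log_2\bigl(2^{n_v-1}-1\bigr)\le n_v-1 .
\]
For the lower bound we want $\log_2(2^{n_v-1}-1)\ge n_v-2$, i.e.\ $2^{n_v-1}-1\ge 2^{n_v-2}$, which is equivalent to $2^{n_v-2}\ge 1$. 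This holds exactly because $n_v\ge 2$, and it is tight when $n_v=2$ (a cherry gives the term $\log_2 1=0$). Summing over the $n-1$ internal vertices yields
\[
S(T)-2(n-1)\le -\log_2\chi(T)\le S(T)-(n-1),
\]
which is (\ref{eq:sackin}).

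There is no real obstacle. The only points needing care are using $n_v\ge 2$ for the lower bound and correctly counting $|I(T)|=n-1$ for a rooted binary tree with $n$ leaves.
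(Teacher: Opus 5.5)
Your proof is correct and is essentially the argument the paper intends. The paper gives no separate proof and presents the corollary as an immediate consequence of Theorem~\ref{thm:measTreePoints}: take $-\log_2$ of the two product formulas, use $S(T)=\sum_{v\in I(T)}n_v$ and $|I(T)|=n-1$, and bound each factor by $2^{n_v-2}\le 2^{n_v-1}-1<2^{n_v-1}$, which holds since $n_v\ge 2$.
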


By Theorem \ref{thm:measTreePoints}, the Haar tree distributions favor balanced trees. The classical Yule distribution \cite{yule1925ii} exhibits a similar preference, but its branching factors decay linearly with clade size, rather than exponentially.

In fact, every Aldous
$\beta$-splitting tree distribution with $\beta>-1$, including the Yule distribution, admits a natural
$2$-adic representation. For every dyadic ball $B\subset\mathbb Z_2$,
let $p_B\stackrel{\mathrm{i.i.d.}}{\sim}
\operatorname{Beta}(\beta+1,\beta+1)$.
Define a random probability measure $\nu_\beta$ on $\mathbb Z_2$
recursively by
\begin{equation}
\nu_\beta(B_0)=p_B\nu_\beta(B),\qquad
\nu_\beta(B_1)=(1-p_B)\nu_\beta(B),
\end{equation}
where $B_0$ and $B_1$ are the two child balls of $B$, and
$\nu_\beta(\mathbb Z_2)=1$.

Let $\mathcal Q_\beta$ denote the probability law of the random measure
$\nu_\beta$. Conditional on $\nu_\beta=\nu$, let
$s_1,\ldots,s_n$ be i.i.d. with distribution $\nu$, and define the
marginal distribution $\mathcal M_n^\beta$ of
$\bm s=(s_1,\ldots,s_n)$ by
\begin{equation}\label{eq:betaMarginal}
\mathcal M_n^\beta(A)
=
\int \nu^{\otimes n}(A)\,
\mathcal Q_\beta(d\nu),
\qquad
A\in\mathcal B(\mathbb Z_2^n).
\end{equation}

\begin{theorem}\label{thm:betasplit}
For every $\beta>-1$, the distribution of trees 
$\mathcal T(\bm s)$ for $\bm s\sim\mathcal M_n^\beta$
is the Aldous $\beta$-splitting distribution.
\end{theorem}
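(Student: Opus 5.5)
We compute the probability that $\mathcal T(\bm s)=T$ for a fixed labeled tree $T$ by conditioning on the random measure $\nu_\beta$. We then show that it factorizes over internal nodes with the Aldous $\beta$-splitting weights. Recall that the Aldous model splits a clade of size $m$ into an unordered pair of sizes $(i,m-i)$, assigning each labeled split of sizes $i$ and $m-i$ the probability $q_m(i)/\binom{m}{i}$ (up to the unordered symmetrization), where
\begin{equation*}
q_m(i)=\frac{1}{a_m(\beta)}\binom{m}{i}\frac{\Gamma(\beta+i+1)\Gamma(\beta+m-i+1)}{\Gamma(\beta+1)^2}\cdot\frac{\Gamma(2\beta+2)}{\Gamma(2\beta+m+2)}\cdot(\ldots),
\end{equation*}
equivalently $q_m(i)\propto \binom{m}{i}B(\beta+i+1,\beta+m-i+1)$ for $1\le i\le m-1$, normalized to exclude $i\in\{0,m\}$.

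The key structural step is a recursion on the ball hierarchy. Consider the set of leaves $A$ whose points lie in a ball $B$, with $|A|=m\ge 2$. Given $\nu_\beta$ and the event that the points $s_i$, $i\in A$, lie in $B$, these points are i.i.d. with law $\nu_\beta(\cdot\cap B)/\nu_\beta(B)$. Each point then independently goes to $B_0$ with probability $p_B$, and the restricted measures on $B_0$ and $B_1$ are again built from fresh independent Beta variables. By self-similarity, these restricted measures have the same law $\mathcal Q_\beta$ (transported to the child balls) and are independent of $p_B$ and of each other.

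Integrating over $p_B$, a specific labeled assignment of $i$ points to $B_0$ and $m-i$ points to $B_1$ has probability
\begin{equation*}
E\big[p_B^i(1-p_B)^{m-i}\big]=\frac{B(\beta+1+i,\beta+1+m-i)}{B(\beta+1,\beta+1)}.
\end{equation*}
If the split is trivial ($i\in\{0,m\}$), all points pass to a single child ball and we recurse. This is the non-full case, and it is where the radix compression of $\mathcal T(\bm s)$ matters. Let $c_m=2\,B(\beta+1+m,\beta+1)/B(\beta+1,\beta+1)$ be the probability of a trivial split. By the independence across levels, the number of consecutive trivial levels before the first nontrivial split is geometric. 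Summing the geometric series, the first nontrivial split, with $i\in\{1,\dots,m-1\}$ specified labeled, has probability
\begin{equation*}
\frac{B(\beta+1+i,\beta+1+m-i)}{B(\beta+1,\beta+1)-2B(\beta+1+m,\beta+1)}.
\end{equation*}
This is exactly Aldous's labeled split probability. The denominator normalizes the sum of $\binom{m}{i}B(\cdot,\cdot)$ over $1\le i\le m-1$, by the Beta-binomial identity $\sum_{i=0}^m\binom{m}{i}B(\beta+1+i,\beta+1+m-i)=B(\beta+1,\beta+1)$. Forgetting which side is labeled $0$ and which $1$ gives the unordered split probability, i.e. a factor $2$ for each split, both in our model and in Aldous's unordered formulation.

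Next we need independence across nodes. Conditional on the first nontrivial split at ball $B'$, the two subclades live in the disjoint child balls of $B'$. The measures there are independent copies of $\nu_\beta$, built from disjoint families of Beta variables. Moreover, the conditioning only involves the Beta variables on the path from $\mathbb Z_2$ down to $B'$ and the allocation of points at those levels, so the sub-configurations are independent and distributed as the original model with $m$ replaced by the subclade sizes. Induction on $n$ then gives
\begin{equation*}
P(\mathcal T(\bm s)=T)=\prod_{v\in I(T)}(\text{labeled Aldous split probability at }v),
\end{equation*}
which is the $\beta$-splitting law. Distinctness of the coordinates holds almost surely, since $\nu_\beta$ is a.s. atomless: along any path the product of Beta factors tends to $0$ almost surely.

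The main obstacle is making the conditional-independence and self-similarity argument rigorous. One must condition on a stopping-type event, namely the "first nontrivial split ball", which depends on the Beta variables along a random path. The plan is to decompose by the deterministic ball $B'$ at which the split occurs: sum over all candidate balls $B'$, and for each one factor the probability into the path contributions (the trivial levels) times $E[p_{B'}^i(1-p_{B'})^{m-i}]$ times the independent subtree probabilities. Done this way, the geometric sum arises explicitly and no strong Markov argument is needed. The condition $\beta>-1$ is used to ensure that the Beta parameters are positive and that $c_m<1$, so that the geometric series converges.
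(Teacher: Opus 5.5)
Your proof is correct and follows the same route as the paper. You integrate $p_B^{k}(1-p_B)^{m-k}$ against the $\operatorname{Beta}(\beta+1,\beta+1)$ density, sum the geometric series over trivial splits, and identify $\rho_{m,k}/(1-\tau_m)$ with Aldous's labeled split probability; beyond the paper, you make explicit (by decomposing over deterministic split balls and using self-similarity of $\nu_\beta$) the cross-node independence that the paper leaves implicit.

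The only loose point is the atomlessness remark. Almost-sure convergence along each fixed path does not control the uncountably many paths at once. Distinctness follows more simply from your own geometric argument with $m=2$, since $\Pr(s_i,s_j\text{ share a ball of depth }d)=c_2^d\to 0$.
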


\begin{proof}
    Consider a clade $A\subseteq [n]$ of $\mathcal T(\bm s)$ with $|A|=m$. Let $B$ be a ball containing all points from $S_A$ and no points from $\{s_1,...,s_n\}\setminus S_A$, which exists by Theorem \ref{thm:embedBasic}, and let $B_0$ and $B_1$ be the child balls of $B$. Conditional on $p_B = p$, for any prescribed subset $X\subset A$ with $|X| = k$ we have

\begin{equation}
    \Pr(S_X\subseteq B_0,S_A\setminus S_X \subseteq B_1 \mid p_B = p) = p^k(1-p)^{m-k}.
\end{equation}

Since $p_B \sim
\operatorname{Beta}(\beta+1,\beta+1)$ with density  $f_{\beta+1}(p)=\frac{1}{B(\beta+1,\beta+1)} p^{\beta}(1-p)^{\beta}$, integration over $p_B$ gives

\begin{equation}
    \rho_{m,k} = \Pr(S_X\subseteq B_0,S_A\setminus S_X \subseteq B_1) = \int_0^1 p^k(1-p)^{m-k} f_{\beta+1}(p)\,dp = \frac{B(k+\beta+1,m-k+\beta + 1)}{B(\beta+1,\beta+1)}.
\end{equation}

Similarly, the probability of the trivial split at the ball $B$, with $X = \emptyset$ or $X = A$, is 

\begin{equation}
\tau_m = 2\frac{B(m+\beta+1,\beta+1)}{B(\beta+1,\beta+1)}.
\end{equation}

Now, the first nontrivial split of $A$ is the prescribed ordered split $(X,A\setminus X)$ whenever $S_X\subseteq B_0,S_A\setminus S_X \subseteq B_1$ for some ball $B$, with this event being preceded by $l$ trivial splits of $A$. Thus we have

\begin{equation}
    \Pr((X, A\setminus X)) = \sum_{l=0}^{\infty}\tau_m^l\rho_{m,k} = \frac{\rho_{m,k}}{1-\tau_m} = \frac{B(k+\beta+1,m-k+\beta + 1)}{B(\beta+1,\beta+1) - 2B(m+\beta+1,\beta+1)}
\end{equation}

Therefore the probability that the first nontrivial ordered split of the $m$-leaf clade has daughter-clade sizes $(k,m-k)$ is

\begin{equation}
    \Pr((k,m-k)) = \binom{m}{k} \frac{B(k+\beta+1,m-k+\beta + 1)}{B(\beta+1,\beta+1) - 2B(m+\beta+1,\beta+1)}.
\end{equation}

\noindent
This is exactly the splitting rule of the Aldous $\beta$-splitting model \cite{aldous2001stochastic}.\end{proof}

\vspace{-2ex}
\section{Algebraic properties of tree embeddings}
\vspace{-1ex}

\subsection{Algebraic operations on tree embeddings and their asymptotic behavior}

The ring structure of $\mathbb{Z}_2$ allows algebraic operations on point embeddings to be defined naturally. For example, we can define their sum and Hadamard product by
$\mathcal{T}(\bm{s})+\mathcal{T}(\bm{t}):=\mathcal{T}(\bm{s}+\bm{t})$
and
$\mathcal{T}(\bm{s})\odot\mathcal{T}(\bm{t}):=\mathcal{T}(\bm{s}\odot\bm{t})$,
provided that the resulting vectors have distinct coordinates.

In general, however, these operations depend on the particular embeddings rather than only on the underlying trees. For example, the embeddings
$\bm{s}=(0,2,1)$ and $\bm{t}=(0,2,7)$ induce the same tree,
$\mathcal{T}(\bm{s})=\mathcal{T}(\bm{t})=((1,2),3)$. Nevertheless,
$\mathcal{T}(\bm{s}+\bm{s})=((1,2),3)$, whereas $\mathcal{T}(\bm{s}+\bm{t})=((1,3),2)$. Thus, the tree induced by the sum is not determined by the trees induced by the summands.

The same holds for the Hadamard product. Let $\bm{s}=(1,2,3)$, $\bm{t}=(1,3,4)$, and $\bm{t}'=(2,4,5)$. Although $\mathcal{T}(\bm{t})=\mathcal{T}(\bm{t}')=((1,2),3)$, we have
$\mathcal{T}(\bm{s}\odot\bm{t})=(1,(2,3))$
and $\mathcal{T}(\bm{s}\odot\bm{t}')=((1,2),3)$. Therefore, to induce algebraic operations on trees themselves, we must first fix a canonical embedding for each tree and apply the operations to these canonical representatives.


We now analyze the limiting behavior of sums and Hadamard products of tree embeddings. Fix $n$ and let $\bm{s}^{(1)},\bm{s}^{(2)},\ldots$ be a sequence of independent random vectors in $\mathbb{Z}_2^n$ drawn from the same distribution $\mathcal{M}$. We assume that $\mathcal{M}$ is non-degenerate, meaning that $\mathcal{M}(A)>0$ for every non-empty open set $A\subseteq\mathbb{Z}_2^n$.

The following theorem shows that the tree induced by the sum of independent $2$-adic embeddings converges to the Haar tree distribution $\chi$, regardless of the initial distribution $\mathcal{M}$. Thus, in a certain sense, it provides a tree-valued analogue of the Central Limit Theorem.

\begin{theorem}\label{thm:HaarLimitSum}
Let $\bm s^{(1)},\bm s^{(2)},\ldots \stackrel{\mathrm{iid}}{\sim}\mathcal{M}$ and $\bm{S}^{(r)} = \sum_{i=1}^r \bm{s}^{(i)}$. For every rooted labeled binary tree $T$,

\begin{equation}\label{eq:HaarTreeLimit}
\lim_{r\to\infty} \Pr\left(\mathcal{T}(\bm{S^{(r)}}) = T\right) = \chi(T) = 
\prod_{v\in I(T)}\frac{1}{2^{n_v-1}-1}.
\end{equation}
\end{theorem}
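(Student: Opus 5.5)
Our strategy is to show that $\bm S^{(r)}$ converges in distribution to the Haar measure $\mu^n$ on $\mathbb{Z}_2^n$. We then pass to the tree, using the fact that the event $\{\mathcal T(\bm s)=T\}$ is, up to a null set, a countable disjoint union of cylinder sets. The relevant random-walk limit is the Haar measure of the compact group $\mathbb{Z}_2^n$, not a Gaussian.

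\textbf{Step 1: equidistribution modulo $2^k$.} Fix $k$ and consider the reductions $\bm s^{(i)} \bmod 2^k$. These are i.i.d. elements of the finite abelian group $G_k=(\mathbb{Z}/2^k\mathbb{Z})^n$. Their common law $m_k$ is the pushforward of $\mathcal M$. Since every coset $\bm a+2^k\mathbb{Z}_2^n$ is a nonempty open set, non-degeneracy gives $m_k(\bm a)>0$ for every $\bm a\in G_k$. Thus $m_k$ has full support.

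By the standard random walk theorem on finite groups (or Fourier analysis on $G_k$), the convolution powers $m_k^{*r}$ converge to the uniform distribution on $G_k$. Concretely, for each nontrivial character $\psi$ we have $|\hat m_k(\psi)|<1$, because $\psi$ is nonconstant on the support, which is all of $G_k$. Hence $\hat m_k(\psi)^r\to 0$. Therefore, for every cylinder set $\bm a+2^k\mathbb{Z}_2^n$,
\[
\Pr\bigl(\bm S^{(r)}\in \bm a+2^k\mathbb{Z}_2^n\bigr)\to 2^{-kn}=\mu^n\bigl(\bm a+2^k\mathbb{Z}_2^n\bigr).
\]

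\textbf{Step 2: approximating the tree event by cylinders.} Let $E_k$ be the set of residues $\bm a\in G_k$ whose coordinates are pairwise distinct modulo $2^k$. For $\bm s$ with $\bm s\bmod 2^k\in E_k$, the tree $\mathcal T(\bm s)$ is already determined by the first $k$ digits of each coordinate. This follows from description (c): every pairwise valuation $v_2(s_i-s_j)$ is then $<k$, so the whole radix tree is read off those digits.

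Let $U_k(T)\subseteq \mathcal E(T)$ be the union of the cylinders with $\bm a\in E_k$ and $\mathcal T=T$. Then
\[
U_k(T)\subseteq\{\mathcal T(\bm s)=T\}\subseteq U_k(T)\cup D_k,
\]
where $D_k=\{\bm s:\ s_i\equiv s_j \pmod{2^k}\text{ for some } i\ne j\}$. The set $D_k$ is also a finite union of cylinders, with $\mu^n(D_k)\le \binom n2 2^{-k}$.

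Applying Step 1 to these finitely many cylinders gives
\[
\liminf_r \Pr\bigl(\mathcal T(\bm S^{(r)})=T\bigr)\ge \mu^n(U_k(T))
\]
and
\[
\limsup_r \Pr\bigl(\mathcal T(\bm S^{(r)})=T\bigr)\le \mu^n(U_k(T))+\mu^n(D_k).
\]
Here we take $\Pr(\mathcal T=T)$ to be $0$ when coordinates coincide. Letting $k\to\infty$, $\mu^n(U_k(T))\uparrow \mu^n(\mathcal E(T))=\chi(T)$ and $\mu^n(D_k)\to0$. This gives the first equality in (\ref{eq:HaarTreeLimit}), and Theorem~\ref{thm:measTreePoints} supplies the product formula.

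\textbf{Main obstacle.} The delicate point is Step 1, namely checking that the random walk on $G_k$ genuinely equidistributes, with no periodicity obstruction. Non-degeneracy handles this cleanly: full support on $G_k$ gives $m_k(\bm 0)>0$, which makes the walk aperiodic, and the support generates $G_k$. The rest is bookkeeping. A secondary subtlety is that $\bm S^{(r)}$ may have coinciding coordinates with positive probability for finite $r$. This is absorbed into $D_k$, whose limiting probability is at most $\binom n2 2^{-k}$.
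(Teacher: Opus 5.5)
Your proof is correct. Its two-step skeleton matches the paper's: first show that the law of $\bm S^{(r)}$ converges to $\mu^n$, then transfer this to the event $\{\mathcal{T}(\bm S^{(r)})=T\}$ using that $\mathcal{T}$ is locally constant off the $\mu^n$-null collision set.

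The difference is in how each step is justified. The paper cites the Kawada--It\^o theorem on the compact group $\mathbb{Z}_2^n$ to get weak convergence $\mathcal{M}^{*r}\Rightarrow\mu^n$. It then notes that $\partial\mathcal{E}(T)$ lies in the collision set, so $\mathcal{E}(T)$ is a $\mu^n$-continuity set, and finishes with the Portmanteau theorem. You instead work on the finite quotients $G_k=(\mathbb{Z}/2^k\mathbb{Z})^n$. There full support gives $|\hat m_k(\psi)|<1$ for every nontrivial character, hence equidistribution of level-$k$ cylinders. You then replace the continuity-set argument by the explicit sandwich $U_k(T)\subseteq\mathcal{E}(T)\subseteq U_k(T)\sqcup D_k$ with $\mu^n(D_k)\le\binom{n}{2}2^{-k}$. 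In effect you reprove, for this profinite group and this particular set, exactly the cases of Kawada--It\^o and Portmanteau that the paper invokes.

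The paper's version is shorter and applies verbatim to any continuity set. Yours is self-contained and elementary. Because it runs through level-$k$ distributions and an explicit bound on $\mu^n(D_k)$, it could also be turned into an explicit error bound in terms of $k$, $r$ and the spectral gap of $m_k$. The weak-convergence argument alone does not give such a bound.
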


\begin{proof}
 By the definition, the distribution of $\bm{S}^{(r)}$ is the convolution power $M^{*r}$. By the Kawada--It\^o theorem \cite{kawada1940probability,mustafayev2022note} applied to the additive group $\mathbb Z_2^n$, 

\begin{equation}\label{eq:KI}
M^{*r}\Longrightarrow\mu^n,
\end{equation}

\noindent
where $\Longrightarrow$ denotes weak convergence of probability measures.

It remains to pass from embeddings to tree topologies. Consider the tree $T$, with $\partial\mathcal{E}(T)$ being the boundary of its embedding set $\mathcal{E}(T)$.  Let

\begin{equation}
    D=\{\bm{t}\in\mathbb Z_2^n:t_i=j_j \text{ for some }i\ne j\}
\end{equation}

\noindent
be the collision set, $\bm{s}\in \mathcal{E}(T)$ be an embedding and $h>\max_{i\ne j} v_2(s_i-s_j)$ be a fixed integer. If $\bm{s}'\in \mathbb{Z}_2^n$ and    $\bm{s}'\equiv\bm{s}\pmod{2^h}$, then it is easy to see that $|s'_i - s'_j|_2 = |s_i - s_j|_2$ for all $i\ne j$, and hence
$\mathcal{T}(\bm s')= T$. Thus the embedding map $\bm{s} \mapsto \mathcal{T}(\bm{s})$ is locally constant on $\mathbb Z_2^n\setminus D$, and therefore $\partial\mathcal{E}(T)\subseteq D$.

Since independent random variables sampled from the Haar distribution coincide with probability zero, $\mu^n(D)=0$. Hence $\mathcal E(T)$ is a continuity set for $\mu^n$.
Equation (\ref{eq:KI}) and the Portmanteau theorem \cite{billingsley2013convergence} therefore imply that

\begin{equation}
    M^{*r}(\mathcal E(T)) \xrightarrow[r\to\infty]{} \mu^n(\mathcal E(T)).
\end{equation}

\noindent
Theorem~\ref{thm:measTreePoints} thus gives (\ref{eq:HaarTreeLimit}). 
\end{proof}

Interestingly, the asymptotic behavior of the Hadamard product is qualitatively different. Although the Haar distribution favors balanced trees, the following theorem shows that the product of independent Haar-distributed embeddings converges to the uniform distribution on caterpillar trees, which are maximally unbalanced. Recall that, for a permutation $\pi\in\bm{S}n$, the caterpillar $C{\pi}=(((\cdots((\pi_1,\pi_2),\pi_3),\pi_4),\ldots),\pi_n)$ is a rooted binary tree whose internal vertices form a single path. Here, $\pi_1$ and $\pi_2$ form the terminal cherry, while $\pi_3,\ldots,\pi_n$ are ordered successively toward the root.

\begin{theorem}\label{thm:HaarLimitProd}
For a fixed $n\geq 1$, let $\bm s^{(1)},\bm s^{(2)},\ldots \stackrel{\mathrm{iid}}{\sim}\mu^n$ and $\bm{Q}^{(r)} = \bm{s}^{(1)} \odot \ldots \odot \bm{s}^{(r)}$. Then, $\mathcal{T}(\bm{Q}^{(r)})$ converges in distribution to a caterpillar $C_{\pi}$, with $\pi \sim \mathrm{Unif}(\bm{S}_n)$ or, equivalently,   

\begin{equation}\label{eq:caterpillar}
    \lim_{r\to\infty}\Pr(\mathcal T(\bm{Q}^{(r)})=T) =
    \begin{cases}
    2/n!,&T\text{ is a caterpillar},\\
    0,&\text{otherwise}.
    \end{cases}    
\end{equation}
\end{theorem}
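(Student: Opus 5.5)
Our plan is to work with the $2$-adic valuations of the product coordinates. Write each Haar-distributed coordinate as $s^{(m)}_i = 2^{V^{(m)}_i} u^{(m)}_i$, where $u^{(m)}_i\in\mathbb{Z}_2^{\times}$. Under $\mu$, the valuation $V^{(m)}_i$ is geometric, $\Pr(V=k)=2^{-(k+1)}$, and it is independent of the unit part $u^{(m)}_i$, which is Haar distributed on $\mathbb{Z}_2^\times$. All these variables are independent across $i$ and $m$. Then $Q^{(r)}_i = 2^{W^{(r)}_i} U^{(r)}_i$, with $W^{(r)}_i=\sum_{m\le r}V^{(m)}_i$ and $U^{(r)}_i$ a unit. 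The key observation is that if the valuations $W_i$ are pairwise distinct, the tree is forced. Order the coordinates so that $W_{\pi_n}<W_{\pi_{n-1}}<\dots<W_{\pi_1}$. For $i,j$ with $W_i\ne W_j$ we have $v_2(Q_i-Q_j)=\min(W_i,W_j)$ by the strict ultrametric equality. Hence leaf $\pi_n$ splits off from all the others at the ball of radius $2^{-W_{\pi_n}}$, then $\pi_{n-1}$ splits off, and so on. The resulting tree is the caterpillar $C_\pi$, in which $\pi_1,\pi_2$ form the cherry. Distinctness of the $Q_i$ holds automatically in this case.

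The first step is to show that $\Pr(W^{(r)}_i=W^{(r)}_j \text{ for some } i\ne j)\to 0$. For a fixed pair, $D=W_i-W_j$ is a sum of $r$ i.i.d. copies of $V-V'$. This is a symmetric, integer-valued, aperiodic variable with finite nonzero variance, since it takes the value $0$ and $\pm1$ with positive probability. By the local central limit theorem, $\Pr(D=0)=O(r^{-1/2})\to0$. An elementary alternative is the concentration bound for lattice sums, $\sup_k\Pr(D=k)\le C/\sqrt{r}$. A union bound over the $\binom n2$ pairs finishes the step. On the complementary event, $\mathcal{T}(\bm Q^{(r)})$ is a caterpillar.

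The second step identifies the limiting law of $\pi$. The vector $(W_1,\dots,W_n)$ is exchangeable, because the coordinates are i.i.d. across $i$. So on the event that all $W_i$ are distinct, the ordering permutation $\pi$ is uniform on $\bm{S}_n$, conditional on that event. The caterpillar $C_\pi$ depends on $\pi$ only up to swapping $\pi_1$ and $\pi_2$, so each caterpillar arises from exactly two permutations. Its probability therefore tends to $2/n!$, and every non-caterpillar tree has probability tending to $0$. This gives \eqref{eq:caterpillar}. The case $n=1$ is trivial, and for $n=2$ the only tree is the cherry, with $2/2!=1$. Note that the unit parts $U^{(r)}_i$ are irrelevant on the good event.

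The main obstacle is the anti-concentration estimate in the first step, namely that ties among the $W_i$ vanish asymptotically. We would first try the local CLT for the symmetric lattice walk with increments $V-V'$. Its characteristic function is real and satisfies $|\phi(t)|<1$ for $t\notin 2\pi\mathbb{Z}$, so $\Pr(D=0)=\frac{1}{2\pi}\int_{-\pi}^{\pi}\phi(t)^r\,dt\to0$ by dominated convergence. This argument is short and self-contained.
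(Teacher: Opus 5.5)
Your proof is correct. Its skeleton matches the paper's: valuations add under $\odot$, pairwise distinct valuations force the caterpillar via the strict ultrametric equality, and exchangeability makes the ordering uniform, with two orderings per caterpillar. The genuine difference lies in the key step, showing that ties $W_i=W_j$ become negligible.

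The paper computes the tie probability exactly. From the negative binomial law $\Pr(W_i=k)=\binom{k+r-1}{r-1}2^{-k-r}$ it writes $\Pr(W_i=W_j)=4^{-r}\,{}_2F_1(r,r;1;1/4)$. It then uses the Pfaff transformation and the Legendre identity to rewrite this as $3^{-r}P_{r-1}(5/3)$, and invokes the Laplace--Heine asymptotic to get $\Pr(W_i=W_j)\sim 1/\sqrt{8\pi r}$.

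You instead treat $W_i-W_j$ as a symmetric lattice random walk and use Fourier inversion with dominated convergence, or the local CLT or Kolmogorov--Rogozin bound. The paper's route gives a closed form and a sharp constant. Your local-CLT version recovers the same constant, since $\mathrm{Var}(V-V')=4$ gives $1/\sqrt{2\pi\cdot 4r}=1/\sqrt{8\pi r}$.

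Your route avoids special-function identities and is more robust. It uses only that the single-factor valuation difference $V_i-V_j$ is non-degenerate; aperiodicity is not even needed, because $|\phi(t)|=1$ then holds only on a finite subset of $[-\pi,\pi]$. So it carries over directly to Hadamard products of i.i.d. embeddings drawn from other coordinate laws, with exchangeability needed only for the uniformity of $\pi$. The dominated-convergence version gives no rate, but the theorem only requires the limit.

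One small slip: the case $n=1$ is not ``trivial'' for the displayed formula, since $2/1!=2$ while the single-leaf tree has probability $1$. The two-permutations-per-caterpillar count presupposes $n\ge 2$. This defect belongs to the statement itself and is shared by the paper's proof.
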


\begin{proof}
    For each coordinate of the product $\bm{Q}^{(r)}$ we have
    \begin{equation}
        v_2(\bm{Q}^{(r)}_i) = \sum_{\ell=1}^r v_2(s_i^{(\ell)}).
    \end{equation}

\noindent
As mentioned before (see the formula (\ref{eq:2adBern})), coefficients of the 2-adic expansions of each random point $s_i^{(\ell)}$ are independent symmetric Bernoulli variables. Recall that $v_2(s_i^{(\ell)}) = k$ if and only if $k$ is the minimal integer such that the corresponding expansion coefficient of $s_i^{(\ell)}$ is non-zero. Therefore, $v_2(s_i^{(\ell)})$ are independent geometric variables with mean $1$ and variance $2$. Hence

\begin{equation}
    \Pr\left(v_2(\bm{Q}^{(r)}_i)=k\right) = \binom{k+r-1}{r-1}2^{-k-r}, \qquad k\ge0.
\end{equation}

Let us now estimate the probability that two coordinates of $\bm{Q}^{(r)}$ have the same valuations. Let $(x)_m = x(x+1)\dots (x+m-1)$ be a Pochhammer symbol, ${}_2F_1(a, b; c; z)$ be a hypergeometric function and $P_m(x)$ be a Legendre polynomial. We have 

\begin{multline}
    p^{(r)}_{ij} = \Pr\left(v_2(\bm{Q}^{(r)}_i)=v_2(\bm{Q}^{(r)}_j)\right) = \sum_{k=0}^{\infty} \binom{k+r-1}{r-1}^22^{-2(k+r)} = 4^{-r} \sum_{k=0}^{\infty} \left(\frac{(r)_k} {k!}\right)^24^{-k} = \\ = 4^{-r} \sum_{k=0}^{\infty} \frac{(r)_k(r)_k}{(1)_k}\frac{4^{-k}}{k!} = 4^{-r}\cdot {}_2F_1(r, r; 1; 1/4).
\end{multline}

Now, using the Pfaff transformation ${}_2F_1(a, b; c; x) = (1-z)^{-b} {}_2F_1(c-a, b; c; \frac{x}{x-1})$ \cite{abramowitz1964handbook} and the identity ${}_2F_1(-m, m+1; 1; x) = P_m(1-2x)$ \cite{abramowitz1964handbook}[eq. 15.4.4], we get

\begin{equation}
    p^{(r)}_{ij} = 3^{-r} {}_2F_1(-(r-1), r; 1; -1/3) = 3^{-r}P_{r-1}(5/3).
\end{equation}

Laplace-Heine asymptotic formula for the Legendre polynomial \cite{szeg1939orthogonal}[eq. 8.21.1] thus gives

\begin{equation}
    p^{(r)}_{ij} \sim 3^{-r} \cdot \frac{1}{\sqrt{2\pi(r-1)}}\frac{(x + \sqrt{x^2-1})^{r-1/2}}{(x^2-1)^{1/4}}\Bigg|_{x=5/3} = 3^{-r}\cdot\frac{3^r}{\sqrt{8\pi r}} = \frac{1}{\sqrt{8\pi r}}.
\end{equation}

Consequently, 

\begin{equation}
    \Pr\left(v_2(Q^{(r)}_i)=v_2(Q^{(r)}_j) \text{ for some } i <j \right) \leq \binom{n}{2} p^{(r)}_{ij} = O\left(\frac{n^2}{\sqrt{r}}\right).
\end{equation}

\noindent
Thus, for the fixed $n$ and $r \rightarrow \infty$, coordinates of $\bm{Q}^{(r)}$ have pairwise distinct valuations with probability tending to one. It follows that $\mathcal{T}(\bm{Q}^{(r)})$ is a caterpillar. Indeed, let $k_i = v_2(Q^{(r)}_{i})$, and let us assume without loss of generality that $k_1 < k_2<\ldots < k_n$. For any $i < j$, the non-Archimedean property gives

\begin{equation}
    v_2\left(Q_{i}^{(r)}-Q_{j}^{(r)}\right) =
\min\{k_i,k_j\} = k_i.
\end{equation}

Hence, for every $i\in [n]$, the points $Q_{i}^{(r)},Q_{{i+1}}^{(r)},\ldots,Q_{n}^{(r)}$ belong to the ball $2^{k_{i}}\mathbb{Z}_2$, while all other points lie outside it. Therefore $A_i = \{i,i+1,...,n\}$ is a clade of $\mathcal{T}(\bm{Q}^{(r)})$, which is exactly a clade structure of a caterpillar.

Finally, note that conditional on distinct valuations, all $n!$ orders are equally likely. Each labeled caterpillar corresponds to two such orders, because the two leaves of the maximum depth can be exchanged. There are therefore $n!/2$ equally likely caterpillars, thus proving (\ref{eq:caterpillar}).
\end{proof}

\subsection{Polynomial and locally affine tree transformations}
Any function $\Phi:\mathbb{Z}_2^n\rightarrow\mathbb{Z}_2^n$ that maps valid embeddings to valid embeddings induces a transformation of embedded trees. In particular, any injective map $f:\mathbb{Z}_2\rightarrow\mathbb{Z}_2$ defines a transformation of tree embeddings,

\vspace{-1ex}
\begin{equation}
f(\mathcal{T}(\mathbf{s})) := 
\mathcal{T}((f(s_1),\ldots,f(s_n))).
\end{equation}

Of particular interest are functions whose induced tree transformations have a clear combinatorial interpretation. A simple example is $f(x)=x+1$, which swaps the two root subtrees of any embedded tree. More generally, standard rearrangements such as NNI, SPR, and TBR \cite{Felsenstein:Book:2003} can be realized by explicit algebraic transformations. 


\begin{figure}
    \centering
    \includegraphics[width=\textwidth]{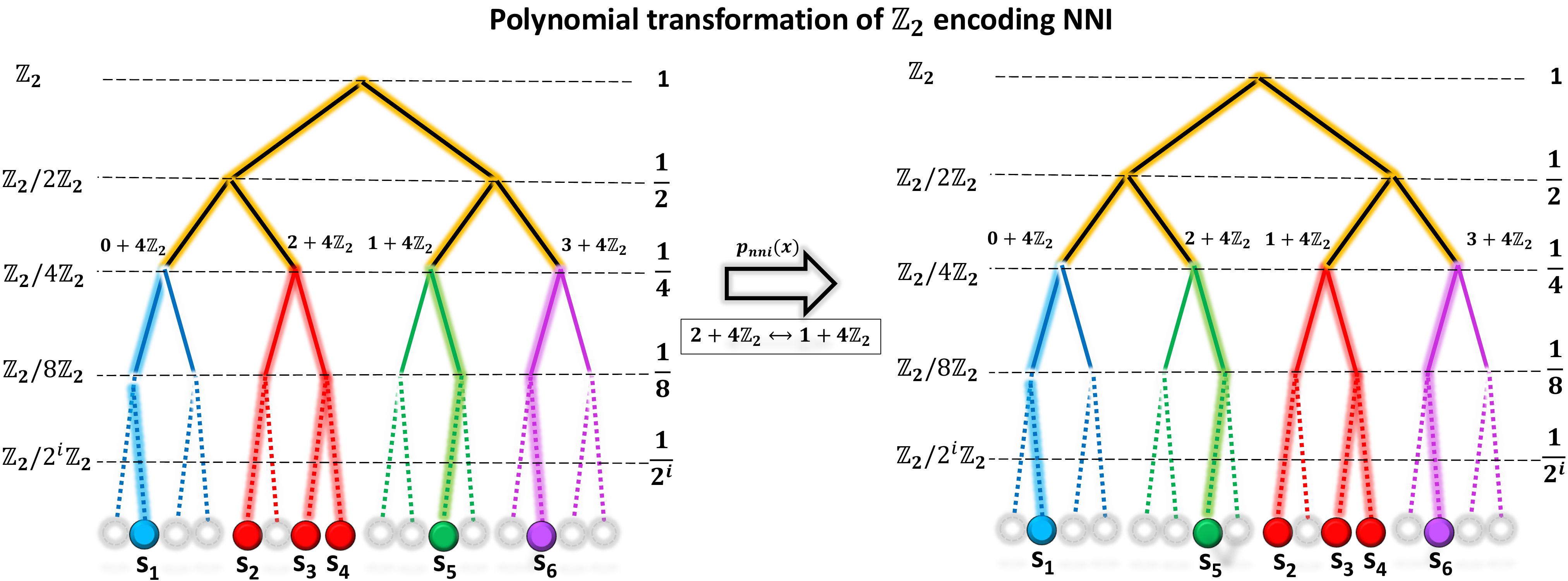}
    \caption{\small $2$-adic polynomial encoding of the NNI move \vspace{-2ex}}
    \label{fig:operationsNNI}
\end{figure}

\subsubsection{Basic tree rearrangements.}
As an example, consider the root NNI move that transforms the topology $T = \mathcal{T}(\bm{s}) = ((A,B),(C,D))$ into $T' = ((A,C),(B,D))$, where $A,B,C,D$ are clades. We can assume without loss of generality that $S_A \subset 4\mathbb Z_2$, $S_B \subset 2+4\mathbb Z_2$, $S_C \subset 1+4\mathbb Z_2$ and $S_D \subset 3+4\mathbb Z_2$. Let us call a function {\it locally affine}, if, for a set of disjoint balls $\{B_1,\ldots B_m\}$, $f|_{B_i} = a_i + b_ix$.

It is easy to see that the root NNI move is implemented by the locally affine function
\vspace{-2ex}
\begin{equation}\label{eq:NNIpieceaff}
f_{nni}(x) =
\begin{cases}
x, & x\equiv 0,3 \pmod{4},\\[4pt]
x-1, & x\equiv 2 \pmod{4},\\[4pt]
x+1, & x\equiv 1 \pmod{4}.
\end{cases}
\end{equation}
\noindent

Indeed, $f_{nni}(x)$ swaps the balls $1+4\mathbb Z_2$ and $2+4\mathbb Z_2$ maps the balls $4\mathbb Z_2$ and $3+4\mathbb Z_2$ into themselves, while acting as an isometry inside each of the residue classes modulo 4, i.e.

\vspace{-1ex}
\begin{equation}
|x-y|_2 = |f_{nni}(x)-f_{nni}(y)|_2
\qquad
\text{whenever } x \equiv y \pmod{4}.
\end{equation}

The local isometry guarantees that the relative $2$-adic distances within each residue class, and hence the topology of the corresponding subtrees of $\mathcal{T}(\bm{s})$, are preserved, while subtrees with clades $B$ and $C$ are swapped  (Fig. \ref{fig:operationsNNI}). Thus, $f_{nni}(\mathcal{T}(\mathbf{s})) = ((A,C),(B,D))$.

Instead of locally affine function, the same operation can be implemented by the single polynomial

\vspace{-1ex}
\begin{equation}\label{eq:NNI}
p_{nni}(x) = x + x^2 - \binom{x}{2}^4,
\end{equation}

\noindent
where $\binom{x}{k}=x(x-1)\cdots(x-k+1)/k!$ \footnote{Although the coefficients of $p_{nni}$ are rational, it is 2-adic integer-valued for $2$-adic integer arguments.}. 

\begin{proposition}
    $p_{nni}(x)$ acts like $f_{nni}(x)$ on residue classes modulo 4, and is an isometry inside each of them.
\end{proposition}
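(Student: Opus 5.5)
The plan is to check the two claims separately. Both rest on the factorization $\binom{x}{2}=x(x-1)/2$ and on the parity of $b(x):=\binom{x}{2}$ in each residue class modulo $4$. Write
\[
p_{nni}(x)=x+g(x),\qquad g(x)=x^2-b(x)^4 .
\]

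\textbf{Residue classes.} I will compute $g(x)$ modulo $4$ (in fact modulo $16$) on each class, writing $x=a+4t$ with $t\in\mathbb Z_2$.
\begin{itemize}
\item[$\bullet$] If $x\equiv 0$ or $1\pmod 4$, then one of $x,x-1$ is divisible by $4$, so $b(x)$ is even and $b(x)^4\equiv 0\pmod{16}$.
\item[$\bullet$] If $x\equiv 2$ or $3\pmod 4$, then the even factor among $x,x-1$ is $\equiv 2\pmod 4$, so $b(x)$ is odd and $b(x)^4\equiv 1\pmod{16}$.
\end{itemize}
Combining this with $x^2\equiv 0,1,4,9\pmod{16}$ on the four classes gives $g(x)\equiv 0,1,3,8\pmod{16}$, respectively. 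Modulo $4$ these are the shifts $0,+1,-1,0$. Hence $p_{nni}(x)\equiv f_{nni}(x)\pmod 4$, so $p_{nni}$ maps each residue class modulo $4$ to the same class as $f_{nni}$ does. The same computation shows that $p_{nni}$ is $2$-adic integer-valued.

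\textbf{Isometry.} Take $x\equiv y\pmod 4$ with $x\neq y$, and put $v=v_2(x-y)\ge 2$. Then
\[
p_{nni}(x)-p_{nni}(y)=(x-y)(1+x+y)-\bigl(b(x)^4-b(y)^4\bigr).
\]
Since $x+y$ is even, $1+x+y$ is odd, so the first term has valuation exactly $v$. By the strict ultrametric inequality (equality case when norms differ), it then suffices to show $v_2\bigl(b(x)^4-b(y)^4\bigr)>v$.

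To get this, factor $b(x)^4-b(y)^4=(b(x)^2-b(y)^2)(b(x)^2+b(y)^2)$, and note $b(x)-b(y)=(x-y)(x+y-1)/2$, so $v_2(b(x)-b(y))=v-1$.

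\textbf{Main obstacle.} The delicate step is bounding the $2$-adic valuation of $b(x)^2-b(y)^2$. The argument splits according to the common parity of $b(x)$ and $b(y)$; this parity is shared because $x\equiv y\pmod 4$.
\begin{itemize}
\item[$\bullet$] Both even. Then $b(x)+b(y)$ is even and $b(x)^2+b(y)^2\equiv 0\pmod 4$. The total valuation is at least $(v-1)+1+2=v+2$.
\item[$\bullet$] Both odd. Then $b(x)+b(y)$ is even, giving $v_2(b(x)^2-b(y)^2)\ge (v-1)+1=v$. Moreover, $b(x)^2+b(y)^2\equiv 2\pmod 4$ contributes one more factor of $2$. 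The total valuation is at least $v+1$.
\end{itemize}
In either case the correction term has valuation strictly greater than $v$. Hence $|p_{nni}(x)-p_{nni}(y)|_2=|x-y|_2$ within each residue class modulo $4$, which is the claimed isometry.
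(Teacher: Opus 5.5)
Your proof is correct and is essentially the paper's argument: the same mod-$4$ parity check of $x^2$ and $\binom{x}{2}^4$, and the same factorization of $\binom{x}{2}^4-\binom{y}{2}^4$. That factorization uses $\binom{x}{2}-\binom{y}{2}=(x-y)(x+y-1)/2$ and the shared parity of $\binom{x}{2},\binom{y}{2}$; the paper phrases the conclusion as the difference quotient being odd, and your case split is unnecessary because $b(x)+b(y)$ and $b(x)^2+b(y)^2$ are even in both cases. One harmless slip: your ``modulo $16$'' values are wrong on the odd classes (e.g.\ $x^2-\binom{x}{2}^4\equiv 9$ at $x=5$ and $\equiv 0 \pmod{16}$ at $x=7$) and hold only modulo $8$, but you only use the residues modulo $4$.
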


\begin{proof}
    It is easy to see that $x^2 \equiv 0 \pmod{4}$ for $x\equiv 0,2 \pmod{4}$ and $x^2 \equiv 1 \pmod{4}$ for $x\equiv 1,3 \pmod{4}$; in other words, it serves as an indicator function for odd classes modulo 4. Similarly, $\binom{x}{2}^4 \equiv 0 \pmod{4}$ for $x\equiv 0,1 \pmod{4}$ and $\binom{x}{2}^4 \equiv 1 \pmod{4}$ for $x\equiv 2,3 \pmod{4}$. So, $x^2 - \binom{x}{2}^4$ reproduce constant terms in (\ref{eq:NNIpieceaff}) modulo 4, and therefore $p_{nni}(x) \equiv f_{nni}(x) \pmod{4}$ for every $x \in \mathbb{Z}_2$.

To demonstrate isometry inside every mod-4 ball, note that 

\begin{equation}
    \frac{p_{nni}(x)-p_{nni}(y)}{x-y} = 1 + (x + y) - \frac{x+y-1}{2}\left(c(x) + c(y) \right)\left(c(x)^2 + c(y)^2 \right),
\end{equation}

\noindent
where $c(x) = \binom{x}{2}$. Suppose that $x - y \equiv 0\pmod{4}$. Then (a) $x+y$ is even; (b) $c(x) - c(y) = \frac{(x-y)(x+y-1)}{2}$ is even, i.e. $c(x)$ and $c(y)$ have the same parity. The latter implies that the term $\frac{x+y-1}{2}\left(c(x) + c(y) \right)\left(c(x)^2 + c(y)^2 \right)$ is even, and therefore $\frac{p_1(x)-p_1(y)}{x-y}$ is odd, thus giving $|p_{nni}(x) - p_{nni}(y)|_2 = |x-y|_2$. \end{proof}

\begin{figure}
    \centering
    \includegraphics[width=\textwidth]{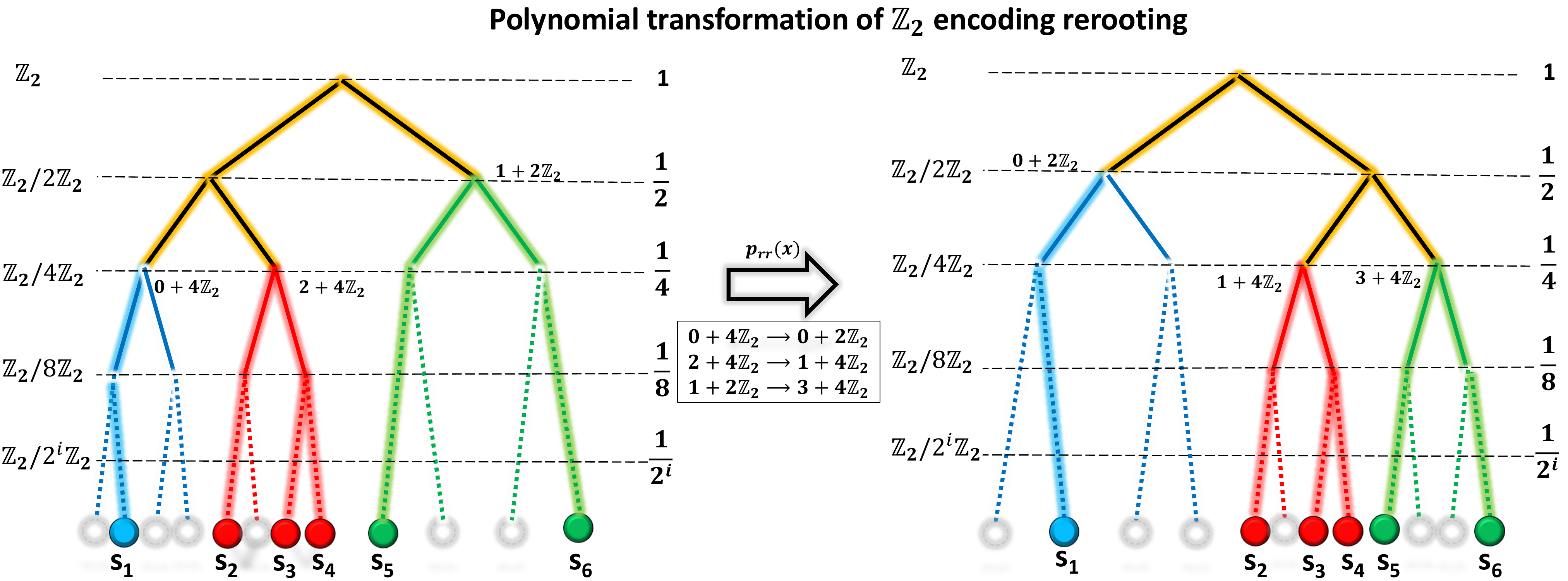}
    \caption{\small $2$-adic polynomial encoding of tree rerooting \vspace{-2ex}}
    \label{fig:operationsRR}
\end{figure}

Similarly, rerooting across an edge incident to the root, i.e. transforming $((A,B),C)$ into $(A,(B,C))$ (Fig. \ref{fig:operationsRR}), can be represented by the locally affine function 

\vspace{-2ex}
\begin{equation}
f_{rr}(x) =
\begin{cases}
\frac{1}{2}x, & x\equiv 0 \pmod{4},\\[4pt]
x-1, & x\equiv 2 \pmod{4},\\[4pt]
2x+1, & x\equiv 1 \pmod{2}.
\end{cases}
\end{equation}

\noindent
This function satisfies
$f_{rr}(4\mathbb Z_2)\subseteq2\mathbb Z_2$,
$f_{rr}(2+4\mathbb Z_2)\subseteq1+4\mathbb Z_2$, and
$f_{rr}(1+2\mathbb Z_2)\subseteq3+4\mathbb Z_2$, together with the local scaling property

\vspace{-2ex}
\begin{equation}
|f_{rr}(x)-f_{rr}(y)|_2 =
\begin{cases}
2|x-y|_2, & x,y\in4\mathbb Z_2,\\[4pt]
|x-y|_2, & x,y\in2+4\mathbb Z_2,\\[4pt]
\frac{1}{2}|x-y|_2, & x,y\in1+2\mathbb Z_2.
\end{cases}
\end{equation}

\noindent
The scaling guarantees that the relative distance relations between points inside the balls above are preserved, and thus topologies of the corresponding subtrees of $\mathcal{T}(\bm{s})$ are unaltered. The same effect can be achieved using the single 2-adic integer valued polynomial polynomial on integer arguments

\vspace{-2ex}
\begin{equation}\label{eq:rerootPol}
p_{rr}(x)=1+2x -\left(\frac{(x-2)(x^2-1)}2\right)^8.
\end{equation}

\begin{proposition}
    $p_{rr}(x)$ acts like $f_{rr}(x)$ on residue classes $4\mathbb{Z}_2$, $2+4\mathbb{Z}_2$ and $1+2\mathbb{Z}_2$, while being the similarity transformation with the scaling factor $\frac{1}{2}$ inside each of them.
\end{proposition}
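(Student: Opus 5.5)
The plan is to split $p_{rr}(x)=1+2x-g(x)^8$, where
\[
g(x)=\frac{(x-2)(x^2-1)}{2}.
\]
The linear part $1+2x$ carries all the geometry. Multiplication by $2$ scales every $2$-adic distance by $|2|_2=\tfrac12$, uniformly. So the claim reduces to two facts about the correction term $g(x)^8$:
\begin{itemize}
\item[(i)] modulo $4$ it reproduces the correct image classes;
\item[(ii)] inside each of the three classes, $g(x)^8-g(y)^8$ is $2$-adically strictly smaller than $2(x-y)$.
\end{itemize}
Given (ii), the ultrametric equality case (\ref{eq:triangval}) yields
\[
|p_{rr}(x)-p_{rr}(y)|_2=|2(x-y)|_2=\tfrac12|x-y|_2 .
\]
A similarity with a common factor preserves all relative distance relations inside each ball. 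Hence the subtree topologies are preserved exactly as for $f_{rr}$.

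For (i), I would compute $v_2(g(x))$ class by class. If $x\equiv0\pmod4$, then $(x-2)/2$ is odd and $x^2-1$ is odd, so $g(x)$ is odd. Since odd squares are $\equiv1\pmod 8$, we get $g(x)^8\equiv1\pmod{16}$. Hence $p_{rr}(x)\equiv 2x\equiv0\pmod 8$, which lies in $2\mathbb Z_2$. If $x\equiv2\pmod4$, then $(x-2)/2$ is even, so $g(x)^8\equiv0\pmod{256}$ and $p_{rr}(x)\equiv1+2x\equiv1\pmod4$. If $x$ is odd, then $8\mid x^2-1$, so $4\mid g(x)$ and $p_{rr}(x)\equiv1+2x\equiv3\pmod4$. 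These are exactly the target classes of $f_{rr}$. This step also shows that $p_{rr}$ is $2$-adic integer valued.

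For (ii), I would factor
\[
g(x)^8-g(y)^8=(g(x)-g(y))(g(x)+g(y))(g(x)^2+g(y)^2)(g(x)^4+g(y)^4),
\]
and write
\[
g(x)-g(y)=(x-y)\,\frac{x^2+xy+y^2-2x-2y-1}{2}.
\]
\begin{itemize}
\item For $x,y$ both odd, the numerator $x^2+xy+y^2-2x-2y-1$ is even, so $v_2(g(x)-g(y))\ge v_2(x-y)$. Also $4\mid g(x),g(y)$, so the three remaining factors contribute valuation at least $2+4+8$.
\item For $x,y$ both even, the numerator is odd, so $v_2(g(x)-g(y))=v_2(x-y)-1$. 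This half-power loss is the main obstacle, and the cofactors must compensate for it.
 \begin{itemize}
 \item In $2+4\mathbb Z_2$, both $g$-values are even. The cofactors then give valuation at least $1+2+4$.
 \item In $4\mathbb Z_2$, both $g$-values are odd. Then $g(x)+g(y)$ is even, and $g(x)^2+g(y)^2\equiv g(x)^4+g(y)^4\equiv2\pmod 8$. The cofactors therefore give valuation at least $1+1+1=3$.
 \end{itemize}
\end{itemize}
In every case
\[
v_2\bigl(g(x)^8-g(y)^8\bigr)\ge v_2(x-y)+2>v_2(2(x-y)),
\]
which establishes (ii) and completes the argument.

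The only delicate bookkeeping is the class $4\mathbb Z_2$, where the margin is a single power of $2$. I would double-check that case explicitly, for example on $x=0$, $y=4$.
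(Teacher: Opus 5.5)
Your proof is correct and follows essentially the paper's route: the same split $p_{rr}(x)=1+2x-q(x)^8$, the same parity computation of $q$ on the three classes, and the same factorization of $q(x)^8-q(y)^8$ together with $\frac{q(x)-q(y)}{x-y}\in\frac12\mathbb{Z}_2$, giving $\frac{p_{rr}(x)-p_{rr}(y)}{x-y}\in 2+4\mathbb{Z}_2$. The paper skips your class-by-class refinement: it observes uniformly that $q(x)$ and $q(y)$ have the same parity within each class, so the three cofactors are even and their product lies in $8\mathbb{Z}_2$, which already yields your bound $v_2\bigl(q(x)^8-q(y)^8\bigr)\ge v_2(x-y)+2$.
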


\begin{proof}
    
    Let $q(x) = \frac{(x-2)(x^2-1)}{2}$ and thus $p_{rr}(x) = 1 + 2x - q(x)^8$. It is easy to see that $q(x) \equiv 1 \pmod{2}$ for $x\in 4\mathbb{Z}_2$, and $q(x) \equiv 0 \pmod{2}$ for $x\in (2+4\mathbb{Z}_2)\cup (1+2\mathbb{Z}_2)$. 

    Consequently, on $4\mathbb Z_2$, the two terms $1 + 2x+1$ and $s(x)^8$ are odd, so their difference is even. On the other two balls, $q(x)^8$ is divisible by $4$, so $p_{rr}(x)\equiv 1 + 2x \pmod{4}$. Thus $p_{rr}(4\mathbb Z_2)\subseteq 2\mathbb Z_2$,
    $p_{rr}(2+4\mathbb Z_2)\subseteq1+4\mathbb Z_2$, and
    $p_{rr}(1+2\mathbb Z_2)\subseteq3+4\mathbb Z_2$.
    
    To prove the scaling property, suppose that $x$ and $y$ belong to the same of the balls of interest. Since inside each of these balls $q(x)$ and $q(y)$ are of the same parity, each term $q(x)+q(y)$, $q(x)^2 + q(y)^2$ and $q(x)^4 + q(y)^4$ is even, and thus their product belongs to $8\mathbb{Z}_2$. Note also that $\frac{q(x)-q(y)}{x-y} = \frac{1}{2}(x^2 + xy + y^2 - 2x - 2y - 1)\in \frac{1}{2}\mathbb{Z}_2$. Together, these facts imply that 

    \begin{equation}
        \frac{q(x)^8 - q(y)^8}{x-y} = \frac{q(x)-q(y)}{x-y}(q(x)+q(y))(q(x)^2 + q(y)^2)(q(x)^4 + q(y)^4) \in 4\mathbb{Z}_2.
    \end{equation}

Hence
    \begin{equation}
    \frac{p_{rr}(x)-p_{rr}(y)}{x-y} = 2 - \frac{q(x)^8 - q(y)^8}{x-y} \in 2 + 4\mathbb{Z}_2.
\end{equation}

This gives the desired relation $|p_{rr}(x)-p_{rr}(y)|_2 = \frac{1}{2}|x-y|_2$. \end{proof}

The inverse rerooting, transforming $(A,(B,C))$ into $((A,B),C)$, is implemented by $p_{rr}(x+1)$. 


Note that a sequence of rerootings at root edge allow to move the root to an arbitrary tree edge; once the desired edge is at the root, the root NNI move performs an ordinary unrooted NNI on that edge. Since the NNI operation connects the space of unrooted binary phylogenies, the polynomials described above connect all rooted binary trees:

\begin{theorem}
    Let $P = \{p_1 = p_{nni},p_2 = p_{rr}, p_3 = x+1\}$ be a set of 3 polynomials. For every pair of binary trees $T_1$ and $T_2$ with embeddings $\mathcal{T}(\bm{s}_1)$ and $\mathcal{T}(\bm{s}_2)$ there exist a sequence $q_1,\ldots,q_k \in P$ such that $\mathcal{T}(\bm{s}_2) = (q_1\circ\ldots \circ q_k)(\mathcal{T}(\bm{s}_1))$.
\end{theorem}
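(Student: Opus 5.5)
The plan is to reduce the statement to connectivity of rooted binary trees under three combinatorial moves, and then realize each move by a member of $P$, working at the level of embeddings. By the propositions above, for an embedding whose root split is aligned with the residue classes modulo $4$, $p_{nni}$ performs the root NNI $((A,B),(C,D))\mapsto((A,C),(B,D))$. Likewise, $p_{rr}$ performs the rerooting $((A,B),C)\mapsto(A,(B,C))$, and $x+1$ swaps the two root subtrees. The main subtlety is that these statements require the clades to sit in specific balls ($A$ in $4\mathbb Z_2$, and so on). So first I would prove a normalization lemma: given any embedding, a short word in $P$ brings it into a position where the desired move applies.

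For normalization, recall that if $\bm s$ has root ball $a+2^k\mathbb Z_2$ with $k>0$, it is not full at the top. To handle this, I would use $p_{rr}$ restricted appropriately. Since $p_{rr}$ is a similarity with factor $1/2$ on each of the three classes, it lifts embeddings. Alternatively, and more robustly, one observes that each $q\in P$ is an injective map $\mathbb Z_2\to\mathbb Z_2$, and the induced tree depends on the embedding. So I would track the invariant that, after applying the relevant polynomials, the embedding is full at its top levels: its root ball is $\mathbb Z_2$ and the children of internal nodes near the root are the child balls. Given fullness at depth two, the four (or three) top clades sit in the classes required by the propositions, up to the swap $x+1$. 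This swap, applied before or after, realizes the left/right choices the propositions need. In an unbalanced case like $((A,B),C)$, I would check that $C\subset 1+2\mathbb Z_2$ and $A,B$ occupy $4\mathbb Z_2$ and $2+4\mathbb Z_2$ in some order, fixing the order by composing with $x+1$ and $p_{rr}(x+1)$ where needed.

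Next comes the combinatorial core. Rooted binary trees on $[n]$ correspond to unrooted binary trees on $[n]\cup\{\rho\}$ by attaching a root leaf. Unrooted NNI connects the space of unrooted binary trees, a classical fact. Each unrooted NNI on an edge $e$ is realized as follows: reroot repeatedly along root edges, via $p_{rr}$ and its inverse $p_{rr}(x+1)$, until $e$ becomes a root-child edge configuration $((A,B),(C,D))$. Then apply $p_{nni}$, possibly preceded by $x+1$ to select which pair is exchanged. Finally, reroot back. I would verify that root-edge rerootings move the root to any edge by induction on the distance from the root to $e$. I would also note that the rooted tree is determined by the unrooted tree together with the root position, so that arbitrary $T_1$ and $T_2$ are connected.

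The final step is to ensure that the result is the specific embedding $\mathcal T(\bm s_2)$. The statement compares trees, so only the topology $T_2$ must be reached, and the above suffices. I expect the main obstacle to be the normalization lemma. The moves are only guaranteed when the embedding is full at the top two levels, while $p_{rr}$ halves distances on one class and doubles them on another, so fullness is not obviously preserved. What I would try first is to show that $p_{rr}$ maps a top-full embedding of $((A,B),C)$ to a top-full embedding of $(A,(B,C))$, and that $p_{nni}$ and $x+1$ preserve top-fullness. Since all rearrangements needed can be arranged to start from a top-full embedding, this invariant carries through the whole sequence. If the initial $\bm s_1$ is not top-full, I would handle it with preliminary rerootings, or argue directly from the canonical position of the branching balls.
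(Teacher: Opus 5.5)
Your plan follows the paper's own route: reroot along root edges until the target edge is at the root, apply the root NNI, and invoke NNI-connectivity. You rightly single out normalization as the crux, which the paper's one-line sketch simply assumes. However, that normalization lemma is false, and with it the statement as written. Every $q\in P$ maps each residue class $c+4\mathbb Z_2$ by a similarity of factor $\lambda\in\{1,\tfrac12\}$ into a ball of radius $\lambda/4$ that contains no $q$-image of a point outside that class. For $x+1$ and $p_{nni}$ this holds because they permute the classes mod $4$ isometrically. For $p_{rr}$, the values $p_{rr}(0)=0$, $p_{rr}(2)=5$, $p_{rr}(1)=3$, $p_{rr}(3)=7-4^8$, together with the factor-$\tfrac12$ proposition, send $4\mathbb Z_2$, $2+4\mathbb Z_2$, $1+4\mathbb Z_2$, $3+4\mathbb Z_2$ into the pairwise disjoint balls $8\mathbb Z_2$, $5+8\mathbb Z_2$, $3+8\mathbb Z_2$, $7+8\mathbb Z_2$. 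Such an image ball has radius at most $1/4$, so it lies inside a single class mod $4$. Induction on word length then gives the same property for every composition $Q=q_1\circ\cdots\circ q_k$. Consequently, the leaves $I_c=\{i: s_{1,i}\in c+4\mathbb Z_2\}$ remain strictly closer to each other than to any other leaf after $Q$, with all their mutual distances rescaled by one common factor. So $I_c$ stays a clade of $\mathcal T(Q(\bm s_1))$ with frozen internal topology, for every word $Q$. In particular, nothing in $P$ ever expands distances inside a class mod $4$.

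For example, $\bm s_1=(0,4,8)$ embeds $((1,3),2)$ inside $4\mathbb Z_2$, and no word in $P$ ever produces $((1,2),3)$. For $n\ge 5$, some class mod $4$ contains two coordinates, so the claim fails for every $\bm s_1$. This also rules out your proposed invariant. From a full embedding of $((A,B),C)$, $p_{rr}$ places $A$ inside $8\mathbb Z_2$, so its root split sits two levels below where top-fullness of $(A,(B,C))$ requires. Since $A$ was confined to one class mod $4$ from the start, no later word can separate its two subclades. Hence the root can never be moved onto an edge inside a clade confined to a single class mod $4$ (for a full $\bm s_1$, any grandchild clade of the root). Your induction on the distance from the root to $e$ therefore breaks as soon as $e$ lies inside such a clade, and no preliminary rerooting can repair a non-full $\bm s_1$. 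The argument can only be made to work for a modified statement. There, the current tree is re-embedded before each move, fully and with the relevant clades placed in the prescribed classes mod $4$, so that normalization is supplied by re-embedding rather than by words in $P$.
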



\begin{figure}
    \centering
    \includegraphics[width=\textwidth]{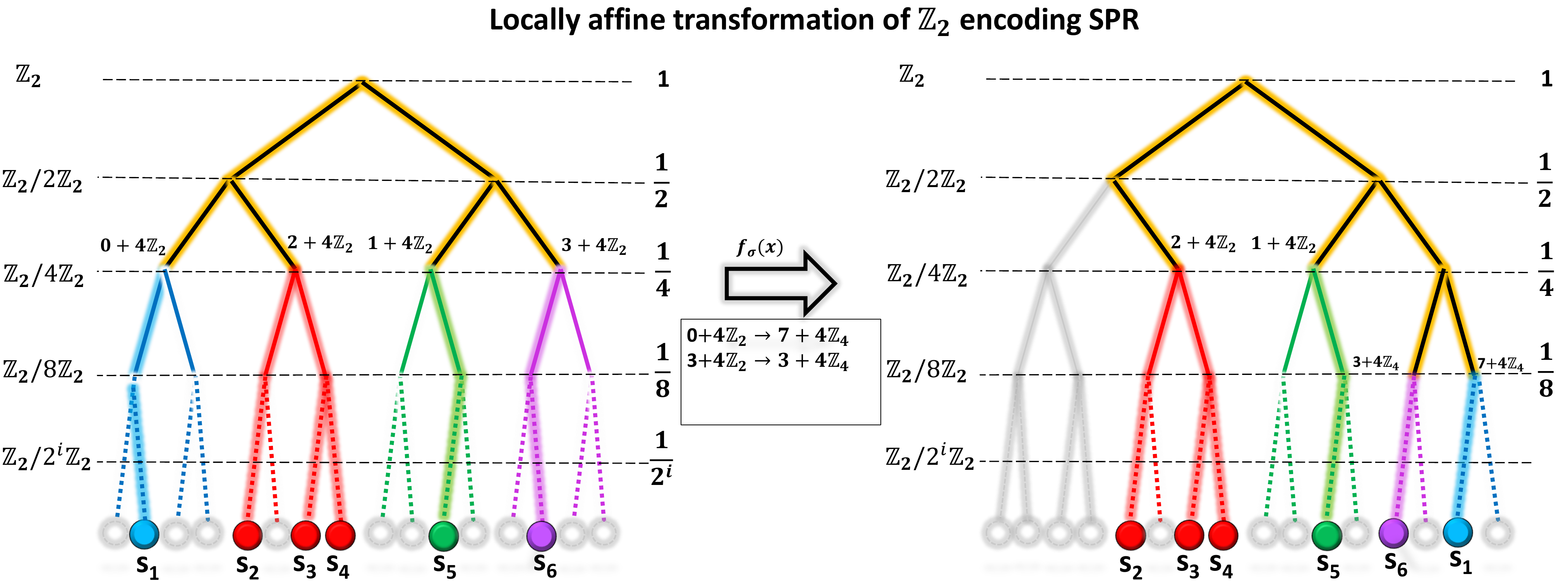}
    \caption{\small $2$-adic locally affine encoding of SPR \vspace{-2ex}}
    \label{fig:operationsSPR}
\end{figure}

SPR can be implemented in a similar manner  (Fig. \ref{fig:operationsSPR}). Suppose that $U$ is the clade of the tree $\mathcal{T}(\bm{s})$ to be pruned and $V$ is the clade below the target edge $e_V$ where $U$ to be re-attached. Denote the corresponding rearrangement operation as $\mathrm{SPR}_{U\rightarrow e_V}$. Let $B_U = B(S_U) = a_U +2^{d_U}\mathbb{Z}_2$ and $B_V = B(S_V) = a_V +2^{d_V}\mathbb{Z}_2$ be the minimal balls containing embedding points corresponding to the respective clades. Define the function $f_{\sigma}$ as follows:

\begin{equation}
    f_{\sigma}(x) =
\begin{cases}
2x-a_V, & x\in B_V,\\[2mm]
a_V+2^{d_V}+ 2^{d_V+1}\displaystyle\frac{x-a_U}{2^{d_U}}, & x\in B_U,\\[3mm]
x, & x\in \mathbb{Z}_2 \setminus (B_U\cup B_V),
\end{cases}
\end{equation}

\noindent
where $\sigma = (a_U,d_U,a_V,d_V)$. Note that the set $\mathbb{Z}_2 \setminus (B_U\cup B_V)$ is the disjoint union of balls.

This function maps $B_U$ and $B_V$ into 2 children balls of $B_U$, while, as in the previous examples, preserving topologies of the corresponding subtrees of $\mathcal{T}(\bm{s})$, as well as the relative locations of all leaves outside of $U\cup V$. Thus, indeed we have $\mathcal{T}(f_{\sigma}(\bm{s})) = \mathrm{SPR}_{U\rightarrow e_V}(\mathcal{T}(\bm{s}))$. Given that SPR connects the phylogeny space, this implies that the parametric family $(f_{\sigma})_{\sigma = (a_u,d_u,a_v,d_v)}$ connects rooted binary tree embeddings:

\begin{theorem}
    For every pair of binary trees $T_1$ and $T_2$ with embeddings $\mathcal{T}(\bm{s}_1)$ and $\mathcal{T}(\bm{s}_2)$ there exist a sequence $\sigma_1,\ldots,\sigma_k \in \mathbb{N}^4$ such that $\mathcal{T}(\bm{s}_2) = (f_{\sigma_1}\circ\ldots \circ f_{\sigma_k})(\mathcal{T}(\bm{s}_1))$.
\end{theorem}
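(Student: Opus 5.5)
The plan is to reduce the statement to two facts. First, each map $f_\sigma$ realizes the move $\mathrm{SPR}_{U\rightarrow e_V}$ on the embedded tree. Second, SPR moves (and hence NNI moves, which are special SPR moves) connect the space of rooted binary trees on leaf set $[n]$.

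The first step is to verify rigorously, for a fixed embedding $\bm{s}$ of $T=\mathcal{T}(\bm{s})$ and a valid pair of clades $U,V$ with $U\cap V=\emptyset$, that $\mathcal{T}(f_\sigma(\bm{s}))=\mathrm{SPR}_{U\rightarrow e_V}(T)$, where $\sigma=(a_U,d_U,a_V,d_V)$. I will check three things.
\begin{itemize}
\item[(i)] On $B_V$ the map $x\mapsto 2x-a_V$ sends $B_V$ into $a_V+2^{d_V+1}\mathbb{Z}_2$, the left child of $B_V$, and multiplies all pairwise distances by $1/2$. Hence the internal topology of $V$ is preserved.
\item[(ii)] On $B_U$ the map is affine with slope $2^{d_V+1-d_U}$. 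It therefore scales distances uniformly, preserves the topology of $U$, and lands in $a_V+2^{d_V}+2^{d_V+1}\mathbb{Z}_2$, the right child of $B_V$.
\item[(iii)] Outside $B_U\cup B_V$ the map is the identity.
\end{itemize}
Since $S_U,S_V$ are clades, $B_U$ and $B_V$ contain no other embedded points. Consequently, for any leaf $w\notin U\cup V$, its distance to points of $f_\sigma(S_U)\cup f_\sigma(S_V)\subseteq B_V$ equals its original distance to $S_V$.

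From this I will read off the clades of the new tree. The clade $U\cup V$ appears, with minimal ball $B_V$, and all clades inside $U$ or $V$ persist. Clades of $T$ containing $V$ but not $U$ gain $U$. Clades containing $U$ lose $U$, because the old positions of $U$ in $B_U$ are vacated. The one subtle case is the old parent ball of $B_U$: it now contains only the sibling clade of $U$, so its vertex is suppressed, exactly as pruning requires. This is precisely the clade system of the SPR result.

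Care is needed for the degenerate configurations: $U$ the root's child, $V$ the sibling of $U$, and $B_U\subset B_V$ (which is impossible when $U,V$ are disjoint clades). I expect the clade-bookkeeping of this case check to be the main obstacle. In particular, one must confirm that $f_\sigma$ is well defined when $B_U$ and $B_V$ are disjoint. One must also confirm that the images remain distinct, which holds since both pieces are injective and have disjoint images in $B_V$. If $e_V$ is allowed to be the root edge, i.e.\ $V=[n]$, I would handle this by noting that the required SPR sequences can avoid that case, or by treating it via $B_V=\mathbb{Z}_2$ in the same way.

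The second step invokes the classical fact that any two rooted binary trees on $[n]$ are connected by a finite sequence of rooted SPR moves. For example, every rooted NNI is a rooted SPR, and rooted NNI connects the tree space; alternatively, one can transform each tree into a fixed caterpillar by successive regrafting of leaves. Given $T_1=\mathcal{T}(\bm{s}_1)$, apply the first move via $f_{\sigma_k}$ with parameters read from the current embedding. The image is again an embedding of the resulting tree, so the next parameters can be read from it, and iteration gives a sequence whose composite maps $\bm{s}_1$ to an embedding of $T_2$.

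Finally, the statement asks to reach $\mathcal{T}(\bm{s}_2)$ as a tree, so equality of trees suffices. The parameters $a_U,a_V$ can be taken as nonnegative integer representatives $<2^{d}$, which gives $\sigma_i\in\mathbb{N}^4$.
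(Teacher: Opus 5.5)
Your overall route is the paper's own: realize $\mathrm{SPR}_{U\rightarrow e_V}$ by $f_\sigma$, then use connectivity of rooted SPR via rooted NNI. Your checks (i)--(iii) are also more careful than the paper's one-line sketch. However, one step fails as written. You take $B_U=B(S_U)$ and $B_V=B(S_V)$ to be the minimal balls and then assert $\sigma=(a_U,d_U,a_V,d_V)\in\mathbb{N}^4$. If $U=\{i\}$ is a single leaf, the minimal ball containing $S_U$ is the point $\{s_i\}$, so there is no finite $d_U$ with $B(S_U)=a_U+2^{d_U}\mathbb{Z}_2$; the same happens for a singleton $V$. This case cannot be avoided. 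The rooted NNI $((A,B),C)\to((A,C),B)$ is $\mathrm{SPR}_{A\rightarrow e_C}$, and for $n=3$ the tree $((1,2),3)$ has no two disjoint non-singleton clades at all. So every nontrivial move prunes a single leaf and regrafts it above a single leaf, and your caterpillar alternative moves single leaves by design. The paper's sketch has the same blind spot, since it also defines $B_U$ as the minimal ball.

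The repair is short and leaves your argument intact. Drop minimality and let $B_U=a_U+2^{d_U}\mathbb{Z}_2$ be any ball that contains $S_U$ and no other coordinate of $\bm{s}$; for $U=\{i\}$, for instance, take $s_i+2^{h}\mathbb{Z}_2$ with $h>\max_{j\ne i}v_2(s_i-s_j)$. Do likewise for $B_V$. The theorem only asks for some $\sigma\in\mathbb{N}^4$, so this is legitimate. Laminarity and $U\cap V=\emptyset$ then force $B_U\cap B_V=\emptyset$, and (i)--(iii) use nothing more. Since the images of $S_V$ and $S_U$ fall into different children of $B_V$, the ball $B_V$ is still the minimal ball of the new $S_U\cup S_V$.

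You should also fix the sentence ``clades containing $U$ lose $U$'', which is false for clades containing both $U$ and $V$ (e.g.\ $[n]$). A cleaner argument uses three facts, which together determine $\mathrm{SPR}_{U\rightarrow e_V}(T)$:
\begin{itemize}
\item[(a)] The new tree restricted to $[n]\setminus U$ is $T|_{[n]\setminus U}$. Distances within $f_\sigma(S_V)$ are halved but stay at most $2^{-d_V-1}$, while distances from $B_V$ to the untouched points are unchanged and at least $2^{1-d_V}$.
\item[(b)] Its restriction to $U$ is $T|_U$.
\item[(c)] $U$ is the sibling of $V$.
\end{itemize}
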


\subsection{Locally scaling transformations and general rearrangements.}

In general, tree rearrangements like NNI and SPR, that modify the relations between selected subtrees of a given tree while preserving local tree structure, can be represented by {\it locally scaling transformations}. Formally, let $X,Y\subseteq \mathbb{Z}_2$ be compact-open sets, and let $\mathcal{B}=\{B_1,\ldots,B_L\}$ be a disjoint partition of  $X$ into balls which exists due to ultrametricity of $\mathbb{Z}_2$. We say that $f: X \rightarrow Y$ is {\it locally scaling}, if for every $B\in \mathcal{B}$, there exists $C_B > 0$ such
that 

\begin{equation}
    |f(x)-f(y)|_2=C_B|x-y|_2
\end{equation}

\noindent
for every $x,y \in B$.


Locally scaling transformations with $C_B \geq 1$ has been introduced and studied in \cite{kingsbery2009measure}. In our case, as the examples above show, scaling factors $C_B < 1$ are possible; however, several results of \cite{kingsbery2009measure} can be easily generalized to cases with arbitrary scaling factors. In particular, the following theorem generalizes all examples presented above.

\begin{theorem}\label{thm:locscale}
    Let $f:X\to Y$ be locally scaling with respect to a finite partition $\mathcal B$ of $X$ into $2$-adic balls, with scaling factors $(C_B)_{B\in\mathcal B}$. Then there exists a locally affine map $\ell:X\to Y$ and infinitely many polynomials $p\in\mathbb Q_2[x]$ such that, for every $B\in\mathcal B$,
    \begin{equation}
    f(B)=\ell(B)=p(B),
    \end{equation}
    and $\ell$ and $p$ have the same scaling factor $C_B$ as $f$ on $B$.
\end{theorem}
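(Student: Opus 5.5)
The proof has two steps: first build the locally affine map $\ell$ ball by ball, then interpolate $\ell$ by polynomials using a congruence/approximation argument in the spirit of \cite{kingsbery2009measure}.

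\emph{Step 1: the locally affine map.} Fix $B=a+2^k\mathbb Z_2\in\mathcal B$. The scaling condition forces $C_B=2^{m}$ for some integer $m$, since values of $|\cdot|_2$ are powers of $2$. A bijective similarity of a ball onto a ball maps balls to balls, and a similarity of a compact ball has compact, ultrametrically ``round'' image. So $f(B)$ is a ball of radius $C_B2^{-k}$, namely $f(B)=f(a)+2^{k-m}\mathbb Z_2$. Here I use that $f|_B$ is an isometry up to scale, so its image has diameter $C_B2^{-k}$. I would also check that it fills the whole ball by a measure/counting argument on residue classes, as in \cite{kingsbery2009measure}: $f$ induces injections on the finitely many subballs of each level. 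Then set
\[
\ell|_B(x)=f(a)+2^{-m}(x-a).
\]
This is affine with $|\ell(x)-\ell(y)|_2=2^{m}|x-y|_2=C_B|x-y|_2$ and $\ell(B)=f(B)$, so $\ell$ is locally affine with the same scaling factors.

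\emph{Step 2: polynomial interpolation.} Let $K=\max_B k_B$ and refine $\mathcal B$ to the residue classes modulo $2^K$, restricting $\ell$ to each class. I need $p\in\mathbb Q_2[x]$ such that on each $B$,
\[
p(x)-\ell(x)\in 2^{N}(x-a_B)\mathbb Z_2
\]
in the divided-difference sense, with $N$ large enough to preserve both the image and the scaling. Concretely, I want $(p(x)-p(y))/(x-y)\equiv 2^{-m_B}u_B$ with $u_B$ a unit and
\[
\left|\frac{p(x)-p(y)}{x-y}-2^{-m_B}\right|_2<2^{m_B},
\]
together with $p(a_B)\equiv \ell(a_B)$ modulo a small power. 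These two conditions give $p(B)\subseteq\ell(B)$ with the correct scaling. Equality $p(B)=\ell(B)$ then follows from the fact that an exact similarity of $B$ into a ball of the same radius is onto, by the Step 1 argument.

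To construct $p$, I would use Mahler-type idempotent polynomials $e_B(x)$ approximating the indicator of $B$: for example, a suitable power $g(x)^{2^t}$ of a polynomial that is a unit on $B$ and divisible by $2$ elsewhere, as in the proofs for $p_{nni}$ and $p_{rr}$. Then set
\[
p(x)=\sum_B e_B(x)\,\ell|_B(x).
\]

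\emph{Main obstacle.} The main obstacle is controlling the derivative, i.e. the divided differences of the correction terms $(e_B(x)-\mathbf 1_B)\ell_B(x)$ inside each ball. I would handle it as in the $p_{rr}$ proof: write $e_B=g^{2^t}$ and factor $g^{2^t}(x)-g^{2^t}(y)$ as $(g(x)-g(y))\prod_i\bigl(g(x)^{2^i}+g(y)^{2^i}\bigr)$. Each factor in the product is even when $g(x)$ and $g(y)$ have the same parity, which gains $2^{t}$. Choosing $t$ large enough to beat the denominators of $g$ and the factor $2^{-m_B}$ makes the error negligible.

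Infinitely many polynomials then come for free: add $2^{M}h(x)\prod_{B}\prod_{j}(x-\cdot)$-type terms, or simply vary $t$, or add $2^{M}x^{j}$ for large $M$. Any such perturbation with sufficiently high $2$-adic divisibility of values and divided differences preserves all the conditions.
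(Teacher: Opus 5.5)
Your proof is correct, but it takes a more self-contained route than the paper.

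\textbf{The paper's route.} The paper constructs neither $\ell$ nor $p$ itself. It rescales the domain by $2^k$ so that every factor $2^kC_B$ is at least $1$. It extends the rescaled map by the identity to a locally scaling self-map of $\mathbb{Z}_2$ and refines to a partition into balls of a common radius $r$. It then invokes Theorem 7.2 and Lemma 7.1 of \cite{kingsbery2009measure} as a black box. Its only hands-on step is showing that the scaling factor, which the citation guarantees only on each $r$-ball, persists on the whole ball $2^kB$; it then rescales back.

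\textbf{Your route.} You write $\ell|_B(x)=f(a_B)+2^{-m_B}(x-a_B)$ explicitly. You justify $f(B)=\ell(B)$ by the level-by-level injection-hence-bijection count on sub-balls, together with compactness of $f(B)$. You then build $p=\sum_B g_B^{2^t}\ell_B$ and control the error by the factorization $g^{2^t}(x)-g^{2^t}(y)=(g(x)-g(y))\prod_{i<t}\bigl(g(x)^{2^i}+g(y)^{2^i}\bigr)$, the same device the paper uses for $p_{rr}$.

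\textbf{What each buys.} Your approach treats contracting factors $C_B<1$ directly: they only shrink the required precision $2^{m_B}$, which a larger $t$ absorbs, so no rescaling or extension is needed. It also yields an explicit polynomial. The $2^Mx^j$ perturbations give infinitely many realizations transparently; varying $t$ alone does not suffice when $\mathcal{B}=\{\mathbb{Z}_2\}$ and $g_B$ is constant. The paper's argument is shorter because it delegates the analytic work to the citation; yours makes that work explicit and checkable.

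\textbf{One point to tidy.} Refining $\mathcal{B}$ to residue classes modulo $2^K$ is unnecessary. If you keep it, the scaling is verified only on the sub-balls, which is exactly the issue the paper's extra paragraph addresses. You can fix this in either of two ways.
\begin{itemize}
\item Drop the refinement and take $g_B$ to be a mod-$2$ indicator of the original ball, for instance $\prod_{j<k_B}\bigl(1+\alpha_j+\binom{x}{2^j}\bigr)$ with $\alpha_j$ the binary digits of $a_B$, by Lucas' theorem. Then every $g_{B'}$ has constant parity on each original $B$, and your divided-difference estimate holds for all $x\neq y$ in $B$.
\item Keep the refinement and observe that $\sup_X|p-\ell|_2\to0$ as $t\to\infty$. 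For large $t$ this is below $C_B2^{1-K}\le C_B|x-y|_2$ whenever $x,y\in B$ lie in different residue classes. The strict ultrametric inequality then gives $|p(x)-p(y)|_2=C_B|x-y|_2$.
\end{itemize}
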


\begin{proof}
Theorem 7.2 of \cite{kingsbery2009measure} establishes the required locally affine and polynomial realizations when the map is a self-map, the underlying balls have a common radius, and all scaling factors satisfy $C_B\geq1$. We show that the general case considered here can be reduced to this setting by rescaling the domain and extending the resulting map to a locally scaling self-map of $\mathbb Z_2$.

Choose $k\ge0$ such that $2^kC_B\ge1$ for all $B\in\mathcal B$, set $\hat X=2^kX$, and define

\begin{equation}
\hat f(x)=f(2^{-k}x),\qquad x\in\hat{X}.
\end{equation}
\noindent
Then $\hat{f}$ has scaling factor $2^kC_B$ on $2^kB$.

Extend $\hat f$ to $\widetilde f:\mathbb Z_2\to\mathbb Z_2$ by setting

\begin{equation}
\widetilde f(x)=
\begin{cases}
\hat f(x),&x\in\hat X,\\
x,&x\notin\hat X.
\end{cases}
\end{equation}

\noindent
Since $\hat X=2^kX$ is a finite union of $2$-adic balls and, in an ultrametric space, the complement of such union is again a finite union of balls, both $\hat X$ and $\mathbb Z_2\setminus\hat X$ admit finite decompositions into $2$-adic balls. Thus, in an ultrametric space there exists a sufficiently small radius $r$ such that every ball in that decompositions is a union of $r$-balls. Thus $\widetilde f$ is locally scaling on the partition of $\mathbb Z_2$ into balls of radii $r$ ({\it $r$-balls}), with all scaling factors at least $1$.

By Theorem 7.2 and Lemma 7.1 of \cite{kingsbery2009measure}, there exist a locally affine map $\widetilde\ell:\mathbb Z_2\to\mathbb Z_2$ and infinitely many polynomials $\widetilde p\in\mathbb Q_2[x]$ such that, for every $r$-ball $\hat B$ of the chosen partition of $\mathbb Z_2$,

\begin{equation}
\widetilde\ell(\hat B) = \widetilde p(\hat B) = \widetilde f(\hat B),
\end{equation}
and $\widetilde\ell$ and $\widetilde p$
have the same scaling factor as $\widetilde f$
on $\hat B$. Since each ball $2^kB$, $B\in\mathcal B$,
is a union of balls from this $r$-ball partition,
it follows that

\begin{equation}
\widetilde\ell(2^kB) = \widetilde p(2^kB) = \widetilde f(2^kB) = f(B).
\end{equation}

Let us show that $\widetilde p$ and $\widetilde\ell$ have scaling factor $2^kC_B$ on the entire ball $2^kB$, not only on $r$-balls comprising it.
Consider $x,y\in 2^kB$. If $x$ and $y$ belong to the same $r$-ball
$\hat B$ of the chosen partition, then

\begin{equation}\label{eq:scaleExtendf}
|\widetilde p(x)-\widetilde p(y)|_2 = |\widetilde f(x)-\widetilde f(y)|_2
= 2^kC_B|x-y|_2.
\end{equation}

Suppose now that $x$ and $y$ belong to distinct
$r$-balls $\hat B_x$ and $\hat B_y$.
Then $|x-y|_2>r$, and therefore
\begin{equation}
|\widetilde f(x)-\widetilde f(y)|_2 = 2^kC_B|x-y|_2 > 2^kC_Br.
\end{equation}
\noindent
Thus $\widetilde f(\hat B_x)$ and $\widetilde f(\hat B_y)$ are balls of radius $2^kC_Br$ whose centers $\widetilde f(x)$ and $\widetilde f(y)$
are farther apart than their radius. Hence these balls are disjoint.

For two disjoint $2$-adic balls of the same radius, the distance between any point of one ball and any point of the other is constant and equals the distance between their centers. Therefore,  equation (\ref{eq:scaleExtendf}) holds, and thus the scaling of  $\widetilde p$ on the whole ball $2^kB$ is proved. The same argument applies to $\widetilde\ell$.

Finally, define

\begin{equation}
    \ell(x)=\widetilde{\ell}(2^kx),\qquad p(x)=\widetilde{p}(2^kx).
\end{equation}

\noindent
Then, for every $B\in \mathcal{B}$, $\ell(B)=p(B)=f(B)$ and for all $x,y\in B$, $|p(x)-p(y)|_2 =(2^kC_B)|2^k(x-y)|_2 =C_B|x-y|_2$, with the same identity for $\ell$. Since distinct polynomials $\widetilde p$ yield distinct polynomials $p$, infinitely many polynomial realizations are obtained. \end{proof}


\subsection{Exploring tree space via 2-adic polynomial local search}
\vspace{-1ex}

Theorem \ref{thm:locscale} demonstrates that the class of tree transformations defined by locally affine and polynomial functions is substantially richer than the limited family of standard tree moves.  This suggests a new type of local search or MCMC sampling in which moves are selected not from a fixed repertoire of combinatorial tree rearrangements, but from broad families of $2$-adic polynomials.


Formally, consider a family of functions $\mathcal{F}=\{f_i: \mathbb{Z}_2 \rightarrow \mathbb{Z}_2 : i \in I\}$. For the given embedding $T = \mathcal{T}(\bm{s})$, $\mathbf{s}=(s_1,\ldots,s_n)$, its neighborhood is defined as 

\vspace{-1ex}
\begin{equation}
\mathcal{N}(T) = \{f_i(T) = 
\mathcal{T}(f_i(s_1),\ldots,f_i(s_n)): i\in I\}.
\end{equation}

\noindent
At each step, the objective is evaluated on the transformed trees and the best candidate is selected, or candidates are sampled according to
a stochastic rule such as simulated annealing or MCMC. Since only finite trees
are considered, all computations can be performed at finite $2$-adic precision. Importantly, the transformation family $\mathcal{F}$ can be made problem-aware, rather than being restricted
to a small set of generic moves as in conventional tree search. It may include both classical rearrangements and problem-specific transformations, with the potential to accelerate convergence, improve solution quality, and escape local optima that trap standard methods.

Consider a parametric family of polynomials 

\begin{equation}\label{eq:pencil}
    \mathcal{P}_{g,h} = \{p_{\theta}(x) = g(x) - \theta h(x) : \theta \in \mathbb{Z}_2\},
\end{equation}

\noindent
where $\theta \in \mathbb{Z}_2$ is a parameter, and $g,h\in \mathbb{Q}[x]$ are 2-adic integer-valued polynomials.  Local search over the family $g-\theta h$ has a natural analogy with directional optimization: one chooses a direction $h$, optimizes over the step parameter $\theta$, and moves along the resulting direction. Conceptually, this is closely related to multidirectional line-search methods in conventional numerical optimization.

Let $\bm{s}\in \mathbb{Z}_2^n$ be a fixed embedding. Then $p_{\theta}(\bm{s})$ is a proper embedding, i.e. $p_{\theta}(s_i) \ne p_{\theta}(s_j)$ if and only if $\theta$ belongs to the {\it admissible set}

\begin{equation}
   \Xi = \left\{q_{ij} = \frac{g(s_i) - g(s_j)}{h(s_i) - h(s_j)}: 1 \leq i < j \leq n, h(s_i) \ne h(s_j)\right\},\quad \Theta = \mathbb{Z}_2 \setminus \Xi.
\end{equation}

\noindent
Suppose also that $(g(s_i),h(s_i))\ne (g(s_j),h(s_j))$ for $i\ne j$. The following theorem shows that {\it varying the coefficient $\theta$ generates only at most quadratically many distinct trees}, independently of the degrees of the polynomials $g$ and $h$.

\begin{theorem}\label{thm:localSearchRegions}
    For every fixed embedding $\bm{s}\in \mathbb{Z}_2^n$ there exists a partition $\mathcal{R} = \mathcal{R}_{\bm{s}}$ of $\mathbb{Z}_2$ into at most $(n-1)(n-2)+1$ clopen sets such that $p_{\theta}(\mathcal{T}(\bm{s}))$ is constant as a function of admissible $\theta$ on each $R\in \mathcal{R}$, i.e.

    \begin{equation}\label{eq:treeEquiv}
        p_{\theta}(\mathcal{T}(\bm{s})) = p_{\theta'}(\mathcal{T}(\bm{s})) \text{ for any } \theta,\theta'\in R\cap \Theta, R \in \mathcal{R}.
    \end{equation}
    \noindent
    Moreover, $R\cap \Theta \ne \emptyset$ for all $R \in \mathcal{R}$.
\end{theorem}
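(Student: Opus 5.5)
The plan is to reduce the dependence of the tree on $\theta$ to the distances $|\theta-q|_2$ from $\theta$ to the finitely many points $q\in\Xi$. Then I partition $\mathbb{Z}_2$ using the finite subtree of the Bruhat--Tits tree $\mathcal{Z}_2$ spanned by $\Xi$.

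Write $\Delta g_{ij}=g(s_i)-g(s_j)$ and $\Delta h_{ij}=h(s_i)-h(s_j)$. By description (b) of Section~\ref{sec:embed}, the tree $\mathcal{T}(p_\theta(\bm s))$ is determined by the ultrametric $u_{ij}(\theta)=v_2(\Delta g_{ij}-\theta\Delta h_{ij})$. Concretely, clades are read off from which pairwise valuations exceed which others.

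\begin{itemize}
\item If $\Delta h_{ij}=0$, then $u_{ij}=v_2(\Delta g_{ij})$ is a finite constant. It is finite because of the assumption $(g(s_i),h(s_i))\neq(g(s_j),h(s_j))$.
\item Otherwise $u_{ij}(\theta)=v_2(\Delta h_{ij})+v_2(\theta-q_{ij})$.
\end{itemize}

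So the tree is a function of the vector $(v_2(\theta-q))_{q\in\Xi}$, or more precisely of the finitely many comparisons between the quantities $c_{ij}+v_2(\theta-q_{ij})$.

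First step: show local constancy. For $\theta\in\Theta$, let $\rho(\theta)=\max_{q\in\Xi} v_2(\theta-q)<\infty$. Every $\theta'$ with $v_2(\theta'-\theta)>\rho(\theta)$ satisfies $v_2(\theta'-q)=v_2(\theta-q)$ for all $q\in\Xi$. This is the same ultrametric argument used in the proof of Theorem~\ref{thm:HaarLimitSum}. Hence the tree is locally constant on the open set $\Theta$.

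Second step: obtain finitely many regions, including near the points of $\Xi$. Fix $q\in\Xi$ and let $\theta\to q$. Then the valuations $u_{ij}$ with $q_{ij}=q$ tend to $\infty$. All other $u_{kl}$ are eventually constant, equal to their values at $\theta=q$. Therefore every comparison involving them stabilises once $v_2(\theta-q)$ exceeds a finite threshold. Comparisons among pairs sharing the same $q$ differ only by the constants $v_2(\Delta h_{ij})$, so they do not depend on $\theta$ at all.

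This yields the partition. Take the finite subtree $\mathcal{S}$ of $\mathcal{Z}_2$ spanned by $\Xi$. Refine each edge at the finitely many threshold radii where some comparison $c_{ij}+v_2(\theta-q_{ij})$ versus $c_{kl}+v_2(\theta-q_{kl})$ switches. Each such switch happens at a single radius along a single path of $\mathcal{S}$. The regions are then of three kinds:
\begin{itemize}
\item for each vertex $B$ of $\mathcal{S}$ (a ball), the set of $\theta$ whose nearest vertex of $\mathcal{S}$ is $B$, i.e.\ $B$ minus the children balls belonging to $\mathcal{S}$;
\item a terminal small ball around each $q\in\Xi$;
\item the complement of the root ball.
\end{itemize}
Each region is a ball minus finitely many sub-balls, hence clopen. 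On each region the vector of comparisons is constant, which gives \eqref{eq:treeEquiv}. Every region is a nonempty open set while $\Xi$ is finite, so $R\cap\Theta\neq\emptyset$ follows at once.

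Third step, the counting, which I expect to be the main obstacle. A rooted tree with $k$ leaves in which every internal vertex branches has at most $2k-1$ vertices. The target bound $(n-1)(n-2)+1=2\bigl(\binom{n-1}{2}+1\bigr)-1$ therefore suggests the following: one must show that, after merging regions inducing the same tree and discarding threshold subdivisions that do not change the tree, the structure behaves like a binary tree on at most $\binom{n-1}{2}+1$ effective leaves. In other words, only $\binom{n}{2}-(n-2)$ of the points $q_{ij}$ are genuinely independent events.

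For this I would exploit the cocycle relations $\Delta g_{ij}-\theta\Delta h_{ij}=(\Delta g_{ik}-\theta\Delta h_{ik})+(\Delta g_{kj}-\theta\Delta h_{kj})$. By the equality case of the ultrametric inequality, in every triangle $\{i,j,k\}$ the two smallest of $u_{ij},u_{ik},u_{jk}$ coincide. Hence a crossing of two valuations inside a triangle is forced rather than free. I would try to show that the changes of tree are governed by the $n-1$ branching balls of the current tree, so that moving $\theta$ through one vertex of $\mathcal{S}$ alters only one clade. If the sharp constant resists, I would at least secure an $O(n^2)$ bound first, via $|\Xi|\le\binom n2$ and the $2|\Xi|+1$ regions above. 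I would then tighten it by showing that all $n-1$ points $q_{1j}$, anchored at a fixed leaf, merge into one effective leaf of $\mathcal{S}$ for tree-changing purposes.
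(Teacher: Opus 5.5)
Your first two steps are sound: local constancy on $\Theta$ and stabilisation near each $q\in\Xi$ do give \emph{some} finite clopen partition with the required properties. But the content of the theorem is the bound $(n-1)(n-2)+1$, and you leave that open.

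Your fallback ``$O(n^2)$ via the $2|\Xi|+1$ regions above'' is not supported either. That count refers to the partition \emph{before} refining at threshold radii, and that partition is not a constancy partition. For a branching ball $B$ of $\Xi$ whose children in $\mathcal S$ are proper sub-balls of $B^{(0)},B^{(1)}$ (e.g.\ small terminal balls), the region ``$B$ minus its children in $\mathcal S$'' contains points of both child balls, and the tree can differ between them. For instance, take $n=3$, $H=h(\mathcal T(\bm s))=((1,2),3)$ and $q_{13}\ne q_{23}$: the triple is $(13,2)$ on the child of $B(q_{13},q_{23})$ containing $q_{13}$, and $(23,1)$ on the other child. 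After refinement the regions are constancy regions. However, your construction only bounds their number by the number of comparisons between the $u_{ij}$, which is $O(n^4)$, or $O(n^3)$ if you keep only comparisons inside triangles.

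The anchored-merging plan is not merely unproven; it fails. For $H=((a,b),(c,d))$ every leaf is the outgroup of some triple. For example, $(cd,a)$ is decided by which child of $B(q_{ac},q_{ad})$ contains $\theta$, so no anchor's ratios can be collapsed. The genuinely irrelevant ratios there are $q_{ab}$ and $q_{cd}$.

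The missing idea is to locate every change of tree exactly, which turns your ``forced crossing'' heuristic into a lemma. Take a rooted triple $(ij,k)$ of $H$ and set $\alpha=v_2(h(s_i)-h(s_j))>\beta=v_2(h(s_i)-h(s_k))$. If $q_{ik}\ne q_{jk}$, set $\varepsilon=v_2(q_{ik}-q_{jk})$. The identity $(h(s_i)-h(s_k))(q_{ik}-q_{jk})=(h(s_i)-h(s_j))(q_{ij}-q_{jk})$ gives $v_2(q_{ij}-q_{jk})=\varepsilon-(\alpha-\beta)<\varepsilon$. Substitute this into $v_2(p_\theta(s_a)-p_\theta(s_b))=v_2(h(s_a)-h(s_b))+v_2(q_{ab}-\theta)$. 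The triple in $p_\theta(\mathcal T(\bm s))$ is then $(ik,j)$ if $\theta$ lies in the child of $B(q_{ik},q_{jk})$ containing $q_{ik}$, $(jk,i)$ if it lies in the child containing $q_{jk}$, and $(ij,k)$ otherwise. If $q_{ik}=q_{jk}$, all three ratios coincide and the triple never changes. So the tree depends only on which of these critical balls contain $\theta$, and there are no thresholds in the interior of edges.

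The sharp count comes from grouping triples by $v=\operatorname{lca}_H(i,j,k)$, not by an anchor leaf. Both $q_{ik}$ and $q_{jk}$ lie in $\mathcal Q_v=\{q_{ab}: a\in A_v,\ b\in B_v\}$, where $A_v,B_v$ are the child clades of $v$. Hence each parent ball $B(q_{ik},q_{jk})$ is one of the $|\mathcal Q_v|-1$ branching balls of $\mathcal Q_v$. This gives at most $2\sum_v(|A_v||B_v|-1)=2\binom n2-2(n-1)=(n-1)(n-2)$ critical balls, and laminarity adds at most one further region.

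This triple analysis needs $h$ injective on $\bm s$. Your treatment of $\Delta h_{ij}=0$ suffices for local constancy but not for the counting. The paper reduces to the injective case with $\tilde h=h+\eta g$, $\eta\in 2\mathbb Z_2$, and the isometric reparametrisation $\phi=\theta/(1+\eta\theta)$.
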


\begin{proof} 
    In what follows, we consider $\theta\in \Theta$. Let $q_{ij} = \frac{g(s_i) - g(s_j)}{h(s_i) - h(s_j)}$. Let us also assume for now that $h|_{\bm{s}}$ is injective, i.e. $h(s_i) \ne h(s_j)$ for $i\ne j$ (later this constraint will be lifted). 
    
    It is easy to check that the following identities hold:
    
    \begin{equation}\label{eq:valptheya}
        p_\theta(s_i)-p_\theta(s_j) =(h(s_i)-h(s_j))(q_{ij}-\theta),
    \end{equation}
    
    \begin{equation} \label{eq:ratioIdentity} 
        (h(s_i)-h(s_k))(q_{ik}-q_{jk}) = (h(s_i)-h(s_j))(q_{ij}-q_{jk}),
    \end{equation}
    
    The first of these identities imply that 

    \begin{equation}\label{eq:pairwiseDist}
        |p_\theta(s_i)-p_\theta(s_j)|_2 = |h(s_i)-h(s_j)|_2|q_{ij}-\theta|_2.
    \end{equation}

    \noindent
    or, in the valuation form,

    \begin{equation}\label{eq:pairwiseValuation}
        v_2\bigl(p_\theta(s_i)-p_\theta(s_j)\bigr) = v_2(h(s_i)-h(s_j))+v_2(q_{ij}-\theta).
    \end{equation}
    
    Denote by $\operatorname{lca}_T(i,j)$ the lowest common ancestor of leaves $i$ and $j$ in the rooted binary tree $T$. Let us call three distinct leaves $i,j,k$ of $T$ {\it a rooted triple} and write $(ij,k)$, if $\operatorname{lca}_T(i,j)$ is a strict descendant of $\operatorname{lca}_T(i,k) = \operatorname{lca}_T(j,k)$ or, equivalently, $i$ and $j$ are closer to each other than to $k$. 

     Consider the trees $P_{\theta} = p_{\theta}(\mathcal{T}(s_1,...,s_n))$, $H = h(\mathcal{T}(s_1,...,s_n))$ and a rooted triple $(ij,k)$ in $H$. Then

     \begin{equation}\label{eq:rootTriple}
         \alpha := v_2(h(s_i) - h(s_j)) > \beta:= v_2(h(s_i) - h(s_k)) = v_2(h(s_j) - h(s_k)).
     \end{equation}

    Now, our goal is to trace what happens with relative location of leaves $i,j,k$ in the tree $P_{\theta}$. This is defined by the change in pairwise distances between these leaves in $H$ and $P_{\theta}$ which is, in turn, described by the equality (\ref{eq:pairwiseDist}).
     
     Assume that $q_{ik} = q_{jk}$. In this case, (\ref{eq:ratioIdentity}) implies that all 3 coefficients $q_{ij}$, $q_{ik}$ and $q_{jk}$ are equal. Therefore, by (\ref{eq:pairwiseDist}), pairwise distances between any pair of leaves from $\{i,j,k\}$ in the trees $P_{\theta}$ and $H$ differ by the same multiplier, and so these leaves form the same rooted triple $(ij,k)$ in $P_{\theta}$.

     Let now $q_{ik} \ne q_{jk}$, and thus ${\varepsilon} = v_2(q_{ik} - q_{jk})$ is finite, i.e. $q_{ik} - q_{jk} = 2^{{\varepsilon}}u$, $u \in \mathbb{Z}_2^{\times}$. Define two balls in $\mathbb{Q}_2$:
    \begin{equation}
    \label{eq:critical-balls-proof}
    U_{(ik,j)}=q_{ik}+2^{{\varepsilon}+1}\mathbb Z_2,
    \qquad
    U_{(jk,i)}=q_{jk}+2^{{\varepsilon}+1}\mathbb Z_2.
    \end{equation}

    \noindent
    Note that they are sibling balls with the common parent ball $U_{ijk} = q_{ik}+2^{{\varepsilon}}\mathbb Z_2 = q_{jk}+2^{{\varepsilon}}\mathbb Z_2$. This follows from the fact that $q_{ik}$ and $q_{jk}$ both belong to $U_{ijk}$ and are not congruent modulo $2^{\varepsilon + 1}$; therefore they belong to different children balls of $U_{ijk}$, thus also being their centers.

     \begin{lemma}
         For every $\theta\in \Theta$,

        \begin{equation}\label{eq:tripleTransition}
        P_{\theta}|_{\{i,j,k\}}
        =
        \begin{cases}
        (ik,j),&\theta\in U_{(ik,j)},\\
        (jk,i),&\theta\in U_{(jk,i)},\\
        (ij,k),&\theta\notin U_{(ik,j)}\sqcup U_{(jk,i)} = U_{ijk}.
        \end{cases}
        \end{equation}
     \end{lemma}
     \begin{proof}
    
    Taking valuations in \eqref{eq:ratioIdentity} and using (\ref{eq:rootTriple}), we get

    \begin{equation}\label{eq:delta}
        \delta := v_2(q_{ij}-q_{jk}) = {\varepsilon} - ( \alpha - \beta) < {\varepsilon}
    \end{equation}

    \noindent
    Similarly, $v_2(q_{ij}-q_{ik}) < {\varepsilon}$.

    Now assume that $\theta\in U_{(ik,j)}$, i.e. $q_{ik} - \theta \equiv 0 \pmod{2^{{\varepsilon}+1}}$. Let us assess the valuations of $q_{ik} - \theta$, $q_{jk} - \theta$ and $q_{ij} - \theta$.  
    For the first difference, we immediately have $v_2(q_{ik} - \theta) \geq {\varepsilon}+1$. For the second difference, $q_{jk} - \theta \equiv q_{jk} - q_{ik} \equiv 2^{{\varepsilon}}u \pmod{2^{{\varepsilon}+1}} $, which gives $v_2(q_{jk} - \theta) = {\varepsilon}$. Finally, for the third difference, the ultrametric equality gives

    \begin{equation}\label{eq:qij}
        v_2(q_{ij} - \theta) = v_2((q_{ij} - q_{jk}) + (q_{jk} - \theta)) = \min\{\delta,{\varepsilon}\} = \delta.
    \end{equation}

    Now, by (\ref{eq:pairwiseValuation}), we have

    \begin{equation}
        v_2\bigl(p_\theta(s_i)-p_\theta(s_k)\bigr) \geq \beta + \varepsilon + 1,\quad  v_2\bigl(p_\theta(s_j)-p_\theta(s_k)\bigr) = \beta + \varepsilon, \quad  v_2\bigl(p_\theta(s_i)-p_\theta(s_j)\bigr) = \alpha +\delta =  \beta + \varepsilon.
    \end{equation}

    \noindent
    So, in the tree $P$ leaves $i$ and $k$ are closer to each other than to $j$, and thus form the triple $(ik,j)$. The case $\theta \in U_{(jk,i)}$ is symmetric.

    It remains to consider the case $\theta\notin U_{ijk}$. In this case, $\lambda := v_2(q_{ik} - \theta) = v_2(q_{jk} - \theta) < \varepsilon$, and thus, from (\ref{eq:pairwiseValuation}),

    \begin{equation}
        v_2\bigl(p_\theta(s_i)-p_\theta(s_k)\bigr) = v_2\bigl(p_\theta(s_j)-p_\theta(s_k)\bigr) = \beta + \lambda.
    \end{equation}
    
    Similarly to (\ref{eq:qij}), the relation (\ref{eq:delta}) and the ultrametric inequality give $v_2(q_{ij} - \theta) \geq \min\{\delta,\lambda\}$. Thus we have 

     \begin{equation}
        v_2\bigl(p_\theta(s_i)-p_\theta(s_j)\bigr)
        \geq \alpha + \min\{\delta,\lambda\} =
        \begin{cases}
        \alpha + \lambda > \beta  + \lambda, \;\; \lambda \leq \delta\\
        \alpha + \delta = \beta + \varepsilon > \beta + \lambda.
        \end{cases}
        \end{equation}
    So, $i$ and $j$ are closer to each other than to $k$ in $P$, and thus form the triple $(ij,k)$.
     \end{proof}

    For every rooted triple $(ij,k)$ of $H$, let $C_{(ij,k)} = U_{(ij,k)}\cap \mathbb{Z}_2$; since $\mathbb{Z}_2 = B(0,1)$ is a ball, so is $C_{(ij,k)}$. We call these balls the \emph{critical balls} associated with the triple $(ij,k)$. Note that critical balls associated with different triples can coincide, and some balls may be trivial, i.e. $C_{(ij,k)} \in \{\emptyset,\mathbb{Z}_2\}$. 

    \begin{lemma}
        Let $\mathcal{C}$ be the set of all non-trivial critical balls. Then $|\mathcal{C}| \leq (n-1)(n-2)$.
    \end{lemma}
    \begin{proof}
        For an internal vertex $v\in V(H)$, let $A_v$ and $B_v$ be the clades corresponding to its two child subtrees. Define
        \begin{equation}\label{eq:cross-pair-ratios}
        \mathcal{Q}_v = \{q_{ij}\in \Xi:i\in A_v,\ j\in B_v\}.
        \end{equation}

        \noindent
        Obviously, it is true that $|\mathcal{Q}_v|\leq |A_v||B_v|$.
        
        For a rooted triple $(ij,k)$, let $v$ be the lowest common ancestor of its three leaves in $H$. Without loss of generality, $i,j\in A_v$ and $k\in B_v$. Hence $q_{ik},q_{jk}\in\mathcal{Q}_v$. If these ratios are distinct, the parent ball $C_{ijk}$ of their two critical balls is a branching ball of $\mathcal{Q}_v$. Thus, for every $v\in I(H)$, the number of parent balls of critical balls of rooted triples separated by $v$ does not exceed the number of  branching balls of $\mathcal{Q}_v$. Recall also that for the set $\mathcal{Q}_v$ there are exactly $|\mathcal{Q}_v|-1$ branching balls.   It follows that
        
        \begin{equation}
        \label{eq:critical-ball-count}
        |\mathcal{C}|
        \le
        2\sum_{v\in I(H)}\bigl(|\mathcal{Q}_v|-1\bigr)
        \le
        2\sum_{v\in I(H)}\bigl(|A_v||B_v|-1\bigr) = 2\sum_{v\in I(H)}|A_v||B_v| - 2(n-1).
        \end{equation}

        Every unordered leaf pair lies in different child subtrees at
        exactly one vertex, its lowest common ancestor. Therefore $\sum_{v\in I(H)}|A_v||B_v|=\binom{n}{2}$, and hence
        
        \begin{equation}
        \label{eq:critical-ball-final-count}
        |\mathcal C| \leq 2\binom{n}{2} - 2(n-1) =(n-1)(n-2).
        \end{equation}
    \end{proof}
    
    Let us partition the parameter space $\mathbb{Z}_2$ by declaring two parameters $\theta_1$ and $\theta_2$ equivalent whenever they belong to exactly the same members of $\mathcal{C}$. Let $\mathcal{R}$ be the resulting set of equivalence classes.

    If $\theta,\theta' \in \Theta$ are equivalent, then, according to (\ref{eq:tripleTransition}), the trees $P_{\theta_1}$ and $P_{\theta_2}$ have the same rooted triples for every three leaves $i,j,k\in [n]$. Since a labeled rooted binary tree is determined by its rooted triples, (\ref{eq:treeEquiv}) follows.

    To count the number of equivalence classes, we will use the laminarity of the ball set $\mathcal{C}$. Consider the extended set $\hat{\mathcal{C}} = \mathcal{C} \cup \{\mathbb{Z}_2\}$, which is also laminar So, $\hat{\mathcal{C}}$ is an inclusion poset whose Hasse diagram $\mathcal{H}$ is a tree. For $B\in \hat{\mathcal{C}}$ let

    \begin{equation}\label{eq:RBdef}
        R_B = B\setminus\bigcup_{C\in \mathrm{ch}_{\mathcal{H}}(B)}C,
    \end{equation}

    \noindent
    where $\mathrm{ch}_{\mathcal{H}}(B)$ be the children of $B$ in $\mathcal{H}$, i.e. the maximal critical balls properly contained in $B$. Thus, $\theta \in R_B$ if and only if $B$ is the minimal critical ball containing $\theta$ (with $B = \mathbb{Z}_2$ whenever $\theta$ does not belong to any critical ball). By laminarity, all parameters in $R_B$ belong to exactly the same critical balls, namely $B$, if \(B\in\mathcal C\), and all critical balls containing $B$. Therefore $\mathcal{R} = \{R_B \ne \emptyset : B\in \hat{\mathcal{C}}\}$ is the desired partition of the parameter space into equivalence classes, and 

    \begin{equation}
        |\mathcal{R}| \leq |\mathcal{C}|+1 \leq (n-1)(n-2)+1.
    \end{equation}

    Furthermore, $R_B = B\bigcap (\bigcap_{C\in\mathrm{ch}_{\mathcal H}(B)} (\mathbb Z_2\setminus C))$; thus, it can be obtain by a finite number of interections and complements of clopen 2-adic balls, which makes it clopen as well. 

    All the considerations above were made under the assumption that $h(s_i) \ne h(s_j)$ for $i\ne j$. We will now show that the established results hold without this assumption. Suppose only that $g(s_i) \ne g(s_j)$ when $h(s_i) = h(s_j)$. Define the function

    \begin{equation}
        \tilde{h}(x) = h(x) + \varepsilon g(x),
    \end{equation}
    where $\varepsilon \in 2\mathbb{Z}_2$ is is chosen so that the values $\tilde{h}(s_i)$, $i\in [n]$, are distinct. Such $\varepsilon$ exists, because the number of values of $\varepsilon$ where $\tilde{h}|_{\bm{s}}$ is non-injective is finite.

    Now set $\phi = \frac{\theta}{1+\varepsilon \theta}$. Since $\varepsilon$ is even, the denominator here is odd and therefore invertible in $\mathbb{Z}_2$, giving $\phi\in \mathbb{Z}_2$. The preceding construction can therefore be applied to $\tilde{p}_{\phi}(x) = g(x)-\phi_{\theta}\widetilde h(x)$. Moreover,  it is easy to check that 

    \begin{equation}\label{eq:paramChange}
        p_{\theta}(x) = g(x)-\theta h(x) =  (1+\varepsilon\theta)\tilde{p}_{\phi}(x).
    \end{equation}

    \noindent
    The function $p_{\theta}$ and $\tilde{p}_{\phi}$ differ by an odd multiplicative constant; thus, distances between images of any pair of point are the same for these functions, and consequently $p_\theta$ and $\widetilde p_\phi$ produce the same tree on $\bm{s}$.
    Similarly, the mapping $\theta\mapsto\phi = \phi(\theta)$ is a bijective isometry of $\mathbb Z_2$. Hence critical balls in the
    $\phi$-parameter space correspond bijectively to critical balls
    of the same radii in the $\theta$-parameter space, and the same
    is true for the associated parameter regions. Therefore all the
    preceding results extend to the present case. This completes the proof. \end{proof}

    \begin{corollary}
        For a fixed embedding $\bm{s}\in \mathbb{Z}_2^n$ and any polynomial $f \in \mathbb{Z}_2[x]$, varying any single
        coefficient of $f$ while keeping all other coefficients fixed produces at most $(n-1)(n-2)+1$ distinct valid trees.
\end{corollary}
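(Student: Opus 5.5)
Write $f(x)=\sum_{m=0}^{d}c_mx^m$ and fix an index $m$. We vary $c_m$ over $\mathbb{Z}_2$ and keep every other coefficient fixed. The plan is to put the resulting family in the form of the pencil \eqref{eq:pencil} and then apply Theorem~\ref{thm:localSearchRegions} directly. Set $g(x)=f(x)-c_mx^m$ and $h(x)=-x^m$, and write the varying coefficient as $\theta$. Then the polynomial with $m$-th coefficient $\theta$ is exactly $p_\theta=g-\theta h\in\mathcal{P}_{g,h}$. Both $g$ and $h$ lie in $\mathbb{Z}_2[x]$, so they are $2$-adic integer valued. Theorem~\ref{thm:localSearchRegions} is stated for $g,h\in\mathbb{Q}[x]$, but its proof uses only the values $g(s_i),h(s_i)\in\mathbb{Z}_2$ and valuation arithmetic, so it applies verbatim to coefficients in $\mathbb{Z}_2$. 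I would state this remark explicitly.

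Next I would check the standing hypothesis of Theorem~\ref{thm:localSearchRegions}: $(g(s_i),h(s_i))\neq(g(s_j),h(s_j))$ for $i\neq j$. This is the only step that needs thought. The tree $\mathcal{T}(\bm s)$ is defined only for distinct coordinates, so $s_i\neq s_j$. If $h(s_i)=h(s_j)$ and $g(s_i)=g(s_j)$ for some $i\neq j$, then $p_\theta(s_i)=p_\theta(s_j)$ for every $\theta$. In that case no value of the coefficient yields a valid tree, the family produces zero valid trees, and the bound holds trivially. Otherwise the hypothesis holds, and the theorem gives a partition $\mathcal{R}$ of the parameter space $\mathbb{Z}_2$ into at most $(n-1)(n-2)+1$ clopen sets. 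On each part, $p_\theta(\mathcal{T}(\bm s))$ is constant over admissible $\theta$.

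Finally, a valid tree arises exactly when $\theta\in\Theta$. Each admissible $\theta$ lies in some $R\in\mathcal{R}$, and the tree it produces depends only on $R$. So the map from admissible parameters to trees factors through $\mathcal{R}$, and the number of distinct valid trees is at most $|\mathcal{R}|\le(n-1)(n-2)+1$.

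The main obstacle is the degenerate case just discussed, where $h(s_i)=h(s_j)$, for instance $s_j=-s_i$ with $m$ even. The proof of Theorem~\ref{thm:localSearchRegions} handles this through the perturbation $\tilde h=h+\varepsilon g$, which needs $g(s_i)\neq g(s_j)$ whenever $h(s_i)=h(s_j)$. That is exactly the dichotomy above: either the pair collides for all $\theta$ and there is nothing to count, or the perturbation argument applies. No further work is needed.
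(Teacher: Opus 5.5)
Your proposal is correct and matches the paper: the corollary is stated there as an immediate consequence of Theorem~\ref{thm:localSearchRegions}, obtained by writing the one-coefficient family as the pencil $g-\theta h$ with $g=f-c_mx^m$ and $h=\pm x^m$. Two points you make explicit are left implicit in the paper: the degenerate case $(g(s_i),h(s_i))=(g(s_j),h(s_j))$, which yields no valid trees, and the remark that coefficients in $\mathbb{Z}_2$ rather than $\mathbb{Q}$ do not affect the theorem's proof.
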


Theorem \ref{thm:localSearchRegions} demonstrates that, although an embedding of a given tree is not unique and the parametric family of polynomials used for local search can be infinite, it is possible to organize a search in such a way that the consecutive search iterations produce different trees. 

Its proof also suggests an algorithm for generating this neighborhood. Indeed, every parent ball of a pair of sibling critical balls is a branching ball of a subset of $\Xi$, and hence is also a branching ball of $\Xi$. Consequently, using the children of all branching balls of $\Xi$ may produce a refinement of the parameter-space subdivision defined by the critical balls, but $\mathcal{T}(p_\theta(\bm{s}))$ is still constant within every region of this refinement.

In what follows, for $q\in\mathbb Z_2$ let $\varphi_k(q)=((q \bmod 2^{k+1})+2^k)\bmod 2^{k+1}$. In words, $\varphi_k(q)$ is the number whose 2-adic expansion retains the first $k$ digits of $q$, flips the next digit, and sets all subsequent digits to zero. Thus,
\begin{equation}
v_2(\varphi_k(q)-q)=k,
\end{equation}

\noindent
so $\varphi_k(q)$ belongs to the ball $q+2^k\mathbb Z_2$ but not to $q+2^{k+1}\mathbb Z_2$. We use $\varphi_k$ to construct explicit representatives of parameter regions: by choosing $k$ appropriately, $\varphi_k(q)$ is placed inside a prescribed larger ball while remaining outside a specified smaller ball containing $q$.

The neighborhood-generation procedure is described in Algorithm \ref{alg:cap}. To estimate its running time, let $m=|\Xi|\le\binom n2$, $d_g=\deg g$, and $d_h=\deg h$. Furthermore, let 

\begin{equation}
    D_{\Xi} = 1+\max_{q\ne q'\in\Xi}v_2(q-q'),\qquad D_p = \max_{\theta\in\Theta_{\rm rep}}
\left(1+\max_{i\ne j}v_2\bigl(p_\theta(s_i)-p_\theta(s_j)\bigr)\right)
\end{equation}

\noindent
be the numbers of 2-adic digits needed to distinguish the elements of $\Xi$ and coordinates $p_{\theta}(s_i)$, respectively. 

It can be shown that $D_{\Xi} = O(D_p)$. Indeed, let $q = q_{ij}$ and $q'$ be collision parameter maximizers for $D_{\Xi}$, and $u$ be their parent. Then $\varepsilon_u=v_2(q-q')=D_{\Xi}-1$. The representative parameter generated by $q$ is $\theta = \varphi_{\varepsilon_u+1}(q)$  (l. 22 of Algorithm \ref{alg:cap}) and therefore, by the definition of $\varphi$, $v_2(q_{ij} - \theta) = \varepsilon_u + 1 = D_{\Xi}$. Then using the identity (\ref{eq:pairwiseValuation}), 

\begin{equation}
    D_p \geq 1 + v_2(p_{\theta}(s_i) - p_{\theta}(s_j)) = v_2(b_i-b_j) + v_2(q_{ij} - \theta) \geq D_{\Xi} + 1
\end{equation}

Now consider the particular stages of Algorithm \ref{alg:cap}. Computing $a_i=g(s_i)$ and $b_i=h(s_i)$ by Horner's rule takes $O(n(d_g+d_h))$ arithmetic operations, and constructing the $m$ collision parameters requires $O(n^2)$ operations. The radix tree of $\Xi$ can be constructed in $O(mD_{\Xi}) = O(mD_p)$ bit operations and has $O(m)$ vertices; thus the algorithm generates at most $2m-1$ representative parameters. For each representative parameter $\theta$, computing the transformed coordinates $p_\theta(s_i)=a_i-\theta b_i$ takes $O(n)$ arithmetic operations, while constructing their radix tree takes $O(nD_p)$ bit operations. Hence the total running time is

\begin{equation}
O\bigl(n(d_g+d_h)+n^2 + mnD_p\bigr).
\end{equation}

In practical settings, $d_g,d_h=O(n)$, and in many phylogenetic models, the trees of interest have logarithmic height. If, in addition, the embeddings $\mathbf{s}$ are full, i.e. we canonicalize the embedding at each step of the local search, then $D_p=O(\log n)$. In this case the total running time becomes
\begin{equation}
O\bigl(n^2+mn\log n\bigr).
\end{equation}


\begin{algorithm}[H]
\setlength{\baselineskip}{0.9\baselineskip}
\caption{Generation of parameter-region representatives and the corresponding tree neighborhood}
\label{alg:cap}
\begin{algorithmic}[1]
    \Require Polynomials $g,h\in\mathbb Z_2[x]$ and an embedding
    vector $\bm{s}\in\mathbb Z_2^n$, with
    $T=\mathcal{T}(\bm{s})$.
    
    \Ensure A set of representative parameters
    $\Theta_{\mathrm{rep}}$ and the corresponding set of distinct
    tree topologies $\mathcal{N}(T)$.

    \State $\Theta_{\mathrm{rep}}\gets\emptyset$.

    \State Compute
    $a_i=g(s_i)$ and $b_i=h(s_i)$ for $i=1,\ldots,n$.

    \If{$(a_i,b_i)=(a_j,b_j)$ for some $i\ne j$}
        \State \Return $\Theta_{\mathrm{rep}},\mathcal N(T)$
    \EndIf

    \State Construct
    $\Xi\gets \left\{ \frac{a_i-a_j}{b_i-b_j}:
    1\le i<j\le n,\; b_i\ne b_j\right\}\cap \mathbb{Z}_2$; $m\gets|\Xi|$
    \Comment{collision parameters}

    \If{$m=0$}
        \State \Return $\Theta_{\mathrm{rep}}=\{0\}$
        \Comment{the topology is constant on $\mathbb Z_2$}
    \EndIf

    \If{$m=1$, i.e. $\Xi=\{q\}$}
        \State $\Theta_{\mathrm{rep}}=\{\varphi_0(q)\}$
        \Comment{a single representative from $\mathbb Z_2\setminus\{q\}$}
    \EndIf

    \If{$m\geq 2$}
    \State Construct the compressed binary 2-adic radix tree
    $R$ of $\Xi$, with $L(R)=\Xi$.

    \State For every vertex $v\in V(R)$ store any $q_v\in C_v$; if $v\in I(R)$, also store
    $\varepsilon_v\gets\min\{v_2(q-q'):q,q'\in C_v,\;q\ne q'\}$. Equivalently, $q_v+2^{\varepsilon_v}\mathbb Z_2$ is the minimal ball containing $C_v$

    \For{$v$ in a preorder traversal of $R$}

        \If{$v=r(R)$ and $\varepsilon_v>0$}
            \State $\theta\gets\varphi_{\varepsilon_v-1}(q_v)$
            \Comment{representative outside the minimal ball containing $\Xi$}

        \ElsIf{$v\in I(R)\setminus\{r(R)\}$ and
        $\varepsilon_v>\varepsilon_{\mathrm{parent}(v)}+1$}
            \State $\theta\gets\varphi_{\varepsilon_v-1}(q_v)$
            \Comment{representative between the parent child ball and the ball containing $C_v$}

        \ElsIf{$v\in L(R)$}
            \State $\theta\gets
            \varphi_{\varepsilon_{\mathrm{parent}(v)}+1}(q_v)$
            \Comment{representative of the terminal ball distinct from $q_v$}

        \EndIf

        \If{a parameter $\theta$ is generated}
            \State $\Theta_{\mathrm{rep}}
            \gets\Theta_{\mathrm{rep}}\cup\{\theta\}$.
        \EndIf
    \EndFor
    \EndIf
    \State Calculate trees $\mathcal{N}(T) = \{\mathcal{T}(a_1 - \theta b_1,\ldots,a_n - \theta b_n): \theta \in \Theta_{\mathrm{rep}}\}$
    \State \Return $\Theta_{\mathrm{rep}},\mathcal N(T)$

\end{algorithmic}
\end{algorithm}

\section{Discussion}

Cross-fertilization between theoretical and applied fields has historically played a crucial role in scientific progress. Computational biology, and phylogenetics in particular, has a rich tradition of applying algebraic methods \cite{casanellas2019algebraic,pachter2005algebraic,pachter2004tropical,sullivant2023algebraic,bocci2007topics,eriksson2005phylogenetic,rhodes2012identifiability,HendySnir-TCBB-2008,HendyPenSte-PNAS-1994,hendy1989framework,baez2015operads,monod2018tropical,sturmfels2005toric,hollering2021identifiability,diaconis1998matchings}. These methods are mathematically powerful and have yielded deep insights into the structure, identifiability, and geometry of phylogenetic models. Their impact on practical phylogenetic inference, however, has remained comparatively limited.

Our broader goal is to explore how methods from discrete optimization, phylogenetics, and non-Archimedean analysis can be combined to develop an {\it optimization framework that is both theoretically sound and practically effective}. This paper is a first step toward this goal. We show that the $2$-adic embeddings studied here provide a useful mathematical language and a methodological basis for formulating several concepts that may contribute to a broader theory of tree calculus and optimization, including algebraic operations on trees, tree measures, and tree limits. Our approach also allows phylogenetic inference to be viewed, in a certain sense, as an embedding problem: given biological sequences $g_1,\ldots,g_n$, the goal is to assign them coordinates in a numerical space so as to optimize a specified objective. This creates a direct analogy with other sequence-embedding problems, with the Euclidean space replaced by the non-Archimedean space of 2-adic integers.

One specific outcome of this study is a generalized local-search approach based on polynomial transformations. This approach offers several potential advantages. First, the richness of the resulting neighborhood structures, together with efficient methods for exploring them, allows search moves to be adapted to particular phylogenetic inference problems. Second, the numerical nature of the search enables efficient arithmetic or bit-level implementations and facilitates reuse of computations within a common ambient $2$-adic space: coordinate-wise transformation values, residue-class tests, and related quantities can be precomputed or cached and reused across multiple tree states. By contrast, standard methods often operate primarily on complete individual tree topologies, making exhaustive precomputation over tree space infeasible.

In the future, the $2$-adic approach may support the development of new optimization methods, potentially including gradient-based or convex techniques, with the goal of improving convergence and solution quality in phylogenetic inference. More broadly, it may provide a basis for algorithmic approaches drawing on algebraic geometry, ergodic theory, and non-Archimedean analysis. Such methods could enable not only the design of new algorithms, but also the study of their convergence, stability, and mixing behavior by viewing them as dynamical systems on $p$-adic spaces. Beyond phylogenetics, the framework may also extend to optimization problems involving trees and hierarchical structures in other application domains. Realizing these possibilities will require substantial further work.

\bibliographystyle{unsrt}
\bibliography{references}

\end{document}